\def\???{{\color{red}???}}
\definecolor{orange}{rgb}{1.0,0.3,0}
\def\<{\langle}
\def\>{\rangle}
\newcommand{\be}{\begin{eqnarray} \begin{aligned}}
\newcommand{\ee}{\end{aligned} \end{eqnarray} }
\newcommand{\benn}{\begin{eqnarray*} \begin{aligned}}
\newcommand{\eenn}{\end{aligned} \end{eqnarray*} }
\newcommand{\ben}{\begin{eqnarray} \begin{aligned}}
\newcommand{\een}{\end{aligned} \end{eqnarray} }
\newcommand{\bc}{\begin{center}}
\newcommand{\ec}{\end{center}}
\newcommand{\norm}[1]{\left\| #1\right \|}
\newcommand{\beq}{\begin{eqnarray} \begin{aligned}}
\newcommand{\eeq}{\end{aligned} \end{eqnarray} }
\newcommand{\bea}{\begin{array}}
\newcommand{\eea}{\end{array}}
\newcommand{\bee}{\begin{enumerate}}
\newcommand{\eee}{\end{enumerate}}
\newcommand{\bei}{\begin{itemize}}
\newcommand{\eei}{\end{itemize}}
\newtheorem{theorem}{Theorem}
\newtheorem{proposition}[theorem]{Proposition}
\newtheorem{lemma}[theorem]{Lemma}
\newtheorem{definition}[theorem]{Definition}
\theoremstyle{remark}
\newtheorem{remark}[theorem]{Remark}
\newcommand{\hil}{\mathcal{H}}
\def\01{\{0,1\}}
\newcommand{\eps}{\varepsilon}
\newcommand{\ket}[1]{|#1\rangle}
\newcommand{\bra}[1]{\langle#1|}
\def\<{\langle}
\def\>{\rangle}
\newcommand{\newreptheorem}[2]{%
\newenvironment{rep#1}[1]{%
 \def\rep@title{#2 \ref{##1} (restatement)}%
 \begin{rep@theorem}}%
 {\end{rep@theorem}}}
\def\paragraph#1{%
  \smallskip%
  \par\noindent{\textbf{#1}}\quad
}
\let\tr\Tr
\newcommand{\R}{\mathbb{R}}
\newcommand{\C}{\mathbb{C}}
\newcommand{\tr}[1]{\mathrm{Tr}\left[ {#1} \right]} 
\newcommand{\Tr}[2]{\mathrm{Tr}_{#1}\left[ {#2} \right]} 
\newcommand{\asmpequiv}{\asymp}
\newcommand{\N}{\mathbb{N}}
\newcommand{\Q}{\mathbb{Q}}
\newcommand{\beqn}{\begin{equation}}
\newcommand{\eeqn}{\end{equation}}
\definecolor{ultramarine}{RGB}{63, 0, 255}
\definecolor{medblue}{RGB}{0, 0, 100}
\definecolor{panblue}{RGB}{0,24,150}
\definecolor{carmine}{RGB}{150, 0, 24}
\newcommand{\anc}{\mathrm{anc}}
\newcommand{\poly}{\mathrm{poly}}
\newcommand{\spec}{\mathrm{sp}}
\let\originalleft\left
\let\originalright\right
\renewcommand{\left}{\mathopen{}\mathclose\bgroup\originalleft}
\renewcommand{\right}{\aftergroup\egroup\originalright}
\begin{document}
\title{A Resource Theory for Work and Heat}
\author{Carlo Sparaciari}\email{carlo.sparaciari.14@ucl.ac.uk}
\affiliation{Department of Physics and Astronomy, University College London,
London WC1E 6BT, United Kingdom}
\author{Jonathan Oppenheim}\email{j.oppenheim@ucl.ac.uk}
\affiliation{Department of Physics and Astronomy, University College London,
London WC1E 6BT, United Kingdom}
\author{Tobias Fritz}\email{fritz@mis.mpg.de}
\affiliation{Max Planck Institute for Mathematics in the Sciences, Leipzig, Germany, and Perimeter Institute for Theoretical Physics, Waterloo, Canada}
\begin{abstract}
Several recent results on thermodynamics have been obtained
using the tools of quantum information theory and resource theories.
So far, the resource theories utilised to describe thermodynamics have
assumed the existence of an infinite thermal reservoir, by declaring that thermal states
at some background temperature come for free. Here, we propose a resource theory of
quantum thermodynamics without a background temperature, so that no states at all come
for free. We apply this resource theory to the case of many non-interacting systems,
and show that all quantum states are classified by their entropy and average energy, even arbitrarily far away from equilibrium. This implies that
thermodynamics takes place in a two-dimensional convex set that we call the \emph{energy-entropy diagram}.
The answers to many resource-theoretic questions about thermodynamics can be read off from
this diagram, such as the efficiency of a heat engine consisting of finite reservoirs, or the rate of conversion between two states.
This allows us to consider a resource theory which puts work and heat on an equal footing, and serves as a model for other resource theories.
\end{abstract}
\date{\today}
\pacs{05.30.-d, 05.70.-a, 03.67.-a}
\maketitle
\section{Introduction}
\label{intro}
To make precise statements about thermodynamics, particular at the quantum scale, we
need to precisely define what thermodynamics is. In particular, we have to specify what an
experimenter is allowed to do to take a system from one state into another. This specification
defines a resource theory, something which has been used to successfully describe
thermodynamic phenomena occurring at the microscopic scale~\cite{janzing_thermodynamic_2000,
horodecki_fundamental_2013,
horodecki_reversible_2003,
brandao2013second,
brandao_resource_2013,
yunger_halpern_beyond_2016,
woods_maximum_2015,
cwiklinski_limitations_2015,
lostaglio_quantum_2015,
wilming_second_2016,
gour_resource_2015,
masanes_derivation_2014,
alhambra_second_2016,
perry_sufficient_2015,
halpern_microcanonical_2015,
alicki_entanglement_2013}. Any line of research which specifies what the rules of thermodynamics are,
can be said to define a resource theory~\cite{crooks1999entropy,
dahlsten_inadequacy_2011,
rio_thermodynamic_2011,
aaberg-singleshot,
faist_minimal_2015,
skrzypczyk_work_2014,
korzekwa_extraction_2016,
narasimhachar_low-temperature_2015,
gallego_thermodynamic_2015,
schulman_molecular_1999}.
These theories typically consist of a state space and a set of allowed operations that can be performed
on the states (see e.g.~\cite{horodecki_quantumness_2012,
coecke_resources_2014,
delrio_knowledge_2015,
vinjanampathy_quantum_2015,
goold_review_2016}
for reviews).
\par
The resource theories developed so far for quantum thermodynamics are based on
assuming that thermal states (Gibbs states) at a fixed background temperature come
for free. In these theories, states are classified by their free energy, and this quantity
also equals the amount of work that can be extracted from many copies of a given
state~\cite{brandao_resource_2013,
aaberg-singleshot,
horodecki_fundamental_2013,
alicki_entanglement_2013,
skrzypczyk_work_2014}.
However, declaring those thermal states to be free boils down to assuming the existence of an infinite thermal reservoir. This cannot always be taken for
granted~\cite{tajima2014optimal,richens_finite_2016}. In some applications, such
as many types of engines, the system under consideration operates on such short
timescales that it must be considered a closed system. In other applications, the
environment is finite and its state changes due to the interaction with the system,
for example when a power plant dumps large amounts of heat to the environment.
It seems therefore imperative to develop thermodynamics as a resource theory
\emph{without} assuming the existence of an infinite thermal reservoir.
This is the aim of the present paper.
\par
The set of allowed operations in our resource theory is much broader
than the one a typical experimentalist can implement, as any energy-preserving unitary
is allowed. Therefore, our theory primarily delineates fundamental limitations to what
is possible in ``real life''. However, we suspect that our abstract achievability results can actually be implemented using a more realistic 
set of operations, as in the resource theory of Thermal Operations~\cite{perry_sufficient_2015},
where only changing the energy levels of the system, and thermal contact with a heat bath is allowed.
But for the time being, our results should be seen as upper limits, only achievable in idealised
conditions.
Moreover, the results that we present within our theory are concerned with the
asymptotic regime, i.e.~the limit of many non-interacting identical system. This follows the
abstract approach to resource theories developed in~\cite{fritz_resource_2015}. Although this is a limitation, we believe that one needs to understand the asymptotic structure of a resource theory first before analyzing the single-shot regime, and this is what we achieve here for thermodynamics.
\par
In addition to providing a general framework for thermodynamics in the absence of an infinite
bath, we prove in Thm.~\ref{TheoSEmain} that two quantum states are asymptotically equivalent under energy-preserving unitaries
if and only if they have same entropy and average energy.
Due to this equivalence, we interpret the specification of entropy and average energy of a
state as the description of a thermodynamic \emph{macrostate}. Thermodynamics in the asymptotic limit
is then studied by only considering the set of macrostates, which we call the \emph{energy-entropy diagram}.
This diagram is a complete description of thermodynamics in the asymptotic limit. All of this takes place arbitrarily far away from equilibrium.

We then use our methods to study rates of conversion between two states of a closed system, and to propose a definition of
the work and heat exchanged while interconverting two states using a finite thermal reservoir and
a battery. The resulting expressions for work and heat, Eqs.~\eqref{final_work_def} and
\eqref{final_heat_def}, recover the standard ones in the limit of an infinite thermal reservoir. 
\section{Framework and allowed operations}
\label{s1:framework}
The systems we consider in our resource theory consist of $n$ copies of a
single $d$-level system described by $\mathbb{C}^d$ (a {\it qudit}) with fixed Hamiltonian $H$;
both $d$ and $H$ are parameters of the theory while $n$ varies. We assume the total Hamiltonian $H_{\text{tot}}$
of the total $n$-copy system to be the sum of single-qudit Hamiltonians $H$, each of them acting on a different copy,
which makes the different copies non-interacting. The resource objects of our theory
are quantum states on such an $n$-qudit system, for arbitrary $n\in\N$. The allowed operations for turning one
such state into another are all the global unitaries $U$ acting on the total system which are energy-preserving, $[U,H_{\text{tot}}]=0$. Thus, we assume perfect control over our closed system,
and the sole limitation is set by the first law of thermodynamics, requiring conservation of energy.
The class of operations is purposely broad, as we are interested in fundamental limitations imposed by the laws
of nature, as opposed to the ones imposed by our limited control over macroscopic systems.
Due to the unitary nature of the operations, the state transformations we consider are reversible by definition.
For simplicity, we assume the Hamiltonian to be fixed throughout, without any possibility of changing it.
In Thm.~\ref{TheoSEmain} we also permit the use of an ancilla system of sublinear size and energy
spectrum which can be initialized in an arbitrary state and gets discarded at the end. While the
ancilla allows us to act more freely over the main system, it does not modify the physical assumptions
made so far. Indeed, since we work in the thermodynamic limit, the sublinearity of the ancilla makes
its energetic and entropic contribution (per single copy of the system) vanishingly small.
Finally, when talking about rates of conversion, we permit discarding subsystems that are decoupled
from the rest. 
\par
Our resource theory can describe both the thermodynamics of closed systems (where the thermal
reservoir is absent), and open systems interacting with a finite thermal reservoir (where the size of the
system is comparable to the one of the reservoir). Thus, our framework extends the one of Thermal
Operations~\cite{janzing_thermodynamic_2000, horodecki_reversible_2003, brandao2013second,
horodecki_fundamental_2013, brandao_resource_2013, aaberg-singleshot}, in which one can
add an arbitrary number of thermal states $\tau_{\beta} = Z_\beta^{-1} e^{-\beta H}$ at a given
temperature $\beta^{-1}$, with $Z_\beta = \tr{e^{-\beta H}}$ the partition function of the system.
Indeed, adding arbitrary many thermal states is equivalent to the system being in
contact with a reservoir with infinite heat capacity. By not allowing this possibility, we obtain a theory
which can describe, among other things, systems in contact with a finite reservoir in which thermal
states are themselves a valuable resource, and recover the case of an infinite reservoir in the limit.
\section{Asymptotic equivalence of quantum states}
\label{s2:adia_equ}
Our resource theory clearly has many conserved quantities: since the allowed operations are all unitaries,
a can be converted into another only if they live on the same number of qudits and have the same
spectrum. Moreover, since our unitaries are energy-preserving, the states must have the same
\emph{distribution} over the energy levels, or equivalently the same moments of energy. This makes our
theory very restrictive.
\par
However, at the asymptotic level the situation is quite different: as it turns out, two quantum states
can be interconverted if and only if they have the same entropy and average energy\footnote{Notice
that these quantities are precisely the \emph{asymptotically continuous}
ones~\cite{synak-radtke_asymptotic_2006}.}. In particular, we say that a state $\rho$ is \emph{asymptotically
equivalent} to another state $\sigma$, and write $\rho\asmpequiv\sigma$ if the equivalent conditions of the following theorem hold.
\begin{theorem}
	For states $\rho$ and $\sigma$ on any quantum system of dimension $d$ with given Hamiltonian $H$, the following are equivalent:
	\begin{enumerate}
		\item\label{equalee} The states have equal entropy and average energy,
			\beqn
			S(\rho) = S(\sigma) \ , \ E(\rho) = E(\sigma).
			\eeqn	
		\item\label{conversion-best} There exists an ancillary system $A$ of $O(\sqrt{n\log n})$ many qudits whose Hamiltonian $H_{A}$ satisfies $\norm{H_{A}} \leq O(n^{2/3})$ with state $\eta$ as well as an energy-preserving unitary $U$ such that
			\beqn
			\label{asympt_eq}
				\norm{\Tr{\mathrm{A}}{U (\rho^{\otimes n}\otimes\eta) U^\dag} - \sigma^{\otimes n}}_1 \stackrel{n\to\infty}{\longrightarrow} 0.
			\eeqn
	\end{enumerate}
	\label{TheoSEmain}
\end{theorem}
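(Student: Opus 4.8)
The plan is to prove the two implications separately; \ref{conversion-best} $\Rightarrow$ \ref{equalee} is routine, while \ref{equalee} $\Rightarrow$ \ref{conversion-best} is where all the work lies.

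\emph{From \ref{conversion-best} to \ref{equalee}.} Energy-preserving unitaries conserve total energy exactly, and since $\|H_A\|\le O(n^{2/3})$ at most $O(n^{2/3})$ units of energy can sit on $A$ at the output; energy conservation on the full system therefore forces the system output $\Tr{A}{U(\rho^{\otimes n}\otimes\eta)U^\dag}$ to carry energy $nE(\rho)+O(n^{2/3})$, so its per-copy energy tends to $E(\rho)$. On the entropy side, unitary invariance gives entropy $nS(\rho)+S(\eta)$ for the global output, and subadditivity together with the Araki--Lieb inequality bound the gap between this and the entropy of the system output by the entropy carried by $A$, which is at most $\log\dim A=O(\sqrt{n\log n})$; hence the per-copy entropy of the system output tends to $S(\rho)$. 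Finally, the convergence \eqref{asympt_eq}, the Fannes--Audenaert inequality, and the boundedness of $H$ force the per-copy entropy and energy of the system output to converge also to $S(\sigma)$ and $E(\sigma)$, so \ref{equalee} follows by comparing limits.

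\emph{From \ref{equalee} to \ref{conversion-best}.} The idea is to standardise $\rho^{\otimes n}$ and $\sigma^{\otimes n}$, at vanishing cost and with the help of the sublinear ancilla, into shapes that an energy-preserving operation can interconvert. The starting point is a second-order asymptotic equipartition property: after smoothing by a trace distance $\eps_n\to 0$, each of $\rho^{\otimes n}$ and $\sigma^{\otimes n}$ is supported on a typical subspace of dimension $2^{nS+O(\sqrt n)}$ with eigenvalues lying in an exponent window of width $O(\sqrt n)$ about $2^{-nS}$; moreover a short computation (averaging the single-copy energies of $H$ over typical eigenvectors, which is legitimate even when $[\rho,H]\neq 0$) shows that the energy distribution of this smoothed state is concentrated, with width $O(\sqrt n)$, about $nE$. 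Crucially the dimensions, the spectral windows and the energy windows all have the \emph{same} leading order, since $S(\rho)=S(\sigma)$ and $E(\rho)=E(\sigma)$. The two smoothed states still differ at the $O(\sqrt n)$ scale, both in spectrum and in energy distribution, and these differences cannot be removed merely by tensoring with an ancilla.

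This is exactly where the ancilla enters, through the discarding step rather than by tensoring: one prepares $A$ (its $O(\sqrt{n\log n})$ qudits and Hamiltonian of norm $\Theta(n^{2/3})$) in a state that is \emph{broad} on the relevant scales --- entropy $\Theta(\sqrt{n\log n})\gg \sqrt n$ and energy width $\Theta(n^{2/3})\gg \sqrt n$ --- and chooses the energy-preserving unitary so that, after discarding $A$, both $\rho^{\otimes n}$ and $\sigma^{\otimes n}$ are taken to a common standard state of the prescribed $(E,S)$. Because the $\sqrt n$-scale discrepancies between $\rho^{\otimes n}$ and $\sigma^{\otimes n}$ are negligible on the scales $\sqrt{n\log n}$ and $n^{2/3}$ over which the ancilla is broad (a convolution estimate), the two standardisations agree up to trace distance $O(1/\sqrt{\log n})+O(n^{-1/6})=o(1)$, and composing the two reductions yields \eqref{asympt_eq}. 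It is convenient to organise this within the asymptotic resource-theory framework of~\cite{fritz_resource_2015}: it suffices to connect $\rho$ and $\sigma$ each to one canonical state of the right $(E,S)$, and one checks separately that a composition of two sublinear-ancilla conversions is again one (sublinear plus sublinear is sublinear, so the ancilla budget in \ref{conversion-best} is preserved).

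\emph{Main obstacle.} The hard part is the quantitative control in the second implication. Since $\rho^{\otimes n}$ and $\sigma^{\otimes n}$ have genuinely different spectra and energy distributions whenever $\rho\neq\sigma$, the equivalence can hold only in the limit, and proving it requires (i) the second-order AEP estimates on both the spectrum and the energy distribution, uniformly in $n$; (ii) the convolution/smearing estimates showing that a state broad on a scale $\gg\sqrt n$ washes out the $\sqrt n$-scale mismatches --- this is what pins the ancilla at $O(\sqrt{n\log n})$ qudits and $O(n^{2/3})$ energy norm, small enough to contribute nothing per copy yet large enough to be broad; and (iii) converting closeness of the smeared (energy, dimension) data into trace-norm closeness of the output states under an energy-preserving operation, including the accounting for the coherences of $\rho$ and $\sigma$, which enter only through the spectral, not the energy, data of the typical subspace.
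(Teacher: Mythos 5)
Your forward direction is essentially the paper's argument (sublinear ancilla bounds the energy and entropy exchange, Fannes closes the gap) and is fine. In the hard direction your high-level picture --- typicality pins everything down to $O(\sqrt{n})$-scale fluctuations, and an ancilla that is ``broad'' on a larger but still sublinear scale washes these out --- is genuinely the same strategy as the paper's. But there are two concrete gaps. First, your plan is to standardise both $\rho^{\otimes n}$ and $\sigma^{\otimes n}$ to a common state $\chi$ and then ``compose the two reductions.'' Each reduction is an energy-preserving unitary \emph{followed by discarding a broad mixed ancilla}; to compose them you must invert the $\sigma$-side reduction, and a partial trace over an ancilla that has become correlated with the system (it has absorbed the $\sqrt{n}$-scale fluctuations --- that is the whole point of the smearing) is not invertible, nor do you argue that the output is close to a product across the cut. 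This meet-in-the-middle statement is exactly the weaker condition (c) of the paper's Theorem~\ref{SEthm}, and the paper only upgrades it to condition (b) by going back through (a). The paper instead builds a direct one-way map: tensor the typical distribution $p_{\typ}$ with $3\sqrt{n\log n}\,S(p)$ uniform bits so that its min-entropy exceeds $H_0(q_{\typ})$, coarse-grain onto $T_q$ via Lemma~\ref{coarsegrain}, and dilate that coarse-graining to a bijection with a pure register of size $O(\sqrt{n\log n})$. You need some such direct construction (or a proof that your standardisation is reversible) for the theorem as stated.

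Second, you dismiss coherences with ``they enter only through the spectral, not the energy, data,'' but this is precisely where the paper has to work hardest (Lemma~\ref{coherence}). For states with coherence across energy eigenspaces, making a unitary energy-preserving is \emph{not} achieved by an ancilla that is merely broad in energy in the probabilistic sense; one needs the ancilla in a coherent uniform superposition $|\mathcal{M}\pm\mathcal{L}|^{-1/2}\sum_h |h\rangle$ over an energy ladder that is approximately invariant under shifts by all typical energy differences, i.e.\ $|\mathcal{M}+\mathcal{L}|\le(1+\delta)|\mathcal{M}|$. Establishing that such a ladder exists with polynomially many rungs and norm $O(n^{2/3})$ requires the sumset-growth argument ($|k\mathcal{L}|=O(\poly(nk))$, hence some $k\le n^{1/7}$ works), and it is this --- $k\,\norm{\mathcal{L}}\le n^{1/7}\cdot O(\sqrt{n\log n})$ --- that produces the $O(n^{2/3})$ bound in the statement, not a generic ``energy width $\Theta(n^{2/3})$'' heuristic. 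Your convolution estimate is the right intuition for the incoherent (classical) case only; without the coherent ladder construction the claimed energy-preserving unitary does not exist for general $\rho,\sigma$.
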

\noindent
where $\|X\|_1=\tr{\sqrt{X^{\dagger}X}}$ is the trace norm, $E(\rho)=\tr{H \rho}$ is the average
energy, and $S(\rho)=- \tr{\rho \log \rho}$ is the von Neumann entropy, which was shown to be
equivalent to a thermodynamic entropy~\cite{mathematical_von_Neumann}, and coincides, in
the thermodynamic limit, with the Boltzmann entropy (see for example Ref.~\cite{brandao_resource_2013}
in a similar context as the present work). We prove this theorem in Appendix~\ref{asy_eq_theorems}, and also show that the bounds on the size and Hamiltonian of the ancilla can equivalently be taken to be $o(n)$. This sublinearity is essential: the amount of entropy and energy that can be exchanged between the system and the ancilla tends to $0$ as $n\rightarrow\infty$, when measured per copy of the system.
This sketches the reason why asymptotic equivalence implies equal entropy and
average energy.
\par
To show the opposite direction, we need to specify a protocol which achieves~\eqref{asympt_eq}, turning $\rho^{\otimes n}$ into something close (in trace norm) to $\sigma^{\otimes n}$.
Concretely, our ancilla is composed by three subsystems, each of them
playing a different role in our protocol. The first subsystem provides
a source of randomness, used to modify the spectrum of the state $\rho^{\otimes n}$;
its state is maximally mixed, and its Hamiltonian is trivial. The second subsystem
is used as a register, and allows us to dilate slightly irreversible operations on the
global system to reversible ones. Its initial state is pure, and the Hamiltonian is again trivial. The third subsystem makes the
transformation energy-preserving, and allows for introducing and removing coherence
in the energy eigenbasis. It has a non-trivial Hamiltonian, and its state is in a uniform
superposition of the energy eigenstates. Overall, the ancilla satisfies the properties
listed in \ref{conversion-best}. The full details are given in Appendix~\ref{asy_eq_theorems}.
\par
Thus we can interconvert asymptotically, using the set of allowed operations, between
states with the same entropy $S$ and average energy $E$, in a reversible manner. Consequently, we can classify
any quantum state asymptotically in terms of these two quantities only. Such a passage
from quantum to macroscopic states is at the core of thermodynamics in the guise of the
passage from \emph{microstates} to \emph{macrostates}. Our result seems to capture
this, despite being built on the idealised assumption of non-interacting copies. From now
on, we identify the many-copy limit that one takes when considering asymptotic equivalence
with the standard macroscopic limit of thermodynamics.

\section{The energy-entropy diagram}
\label{en_ent_diag}

Theorem~\ref{TheoSEmain} shows that it is only the energy $E(\rho)$ and entropy $S(\rho)$ of a state $\rho$ that determines its behaviour under many-copies transformations. Hence as far as the many-copies level is concerned, we can \emph{identify} a state $\rho$ with the pair of values $(E(\rho),S(\rho))\in\R^2$. In order to understand thermodynamics as a resource theory asymptotically, we therefore need to ask: which pairs of numbers $x=(x_E,x_S)\in\R^2$ do arise from a state in this manner? We call this set the \emph{energy-entropy diagram}. The energy-entropy diagram depends on the system Hamiltonian $H$, and can be characterised as follows:

\begin{figure}[!ht]
	\centering
	\includegraphics[width=0.95\columnwidth]{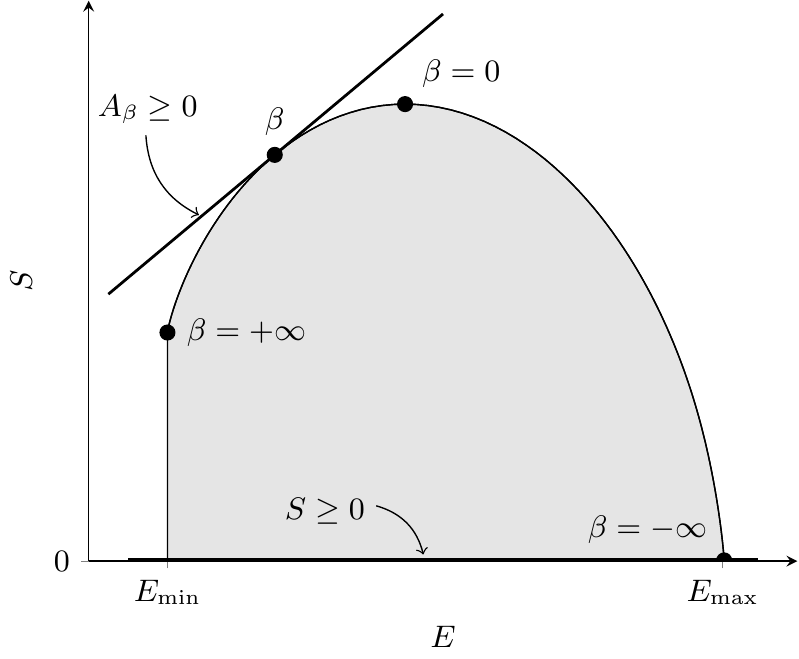}
	\caption{The energy-entropy diagram representing the state space of a quantum
system with Hamiltonian $H$, degenerate in the ground state (vertical line). The
physical points are inside the grey area. Each point $\left( E , S \right)$ represents
an equivalence class of microstates, i.e.~a single macrostate. Ineq.~\eqref{entropy_ineq},
postulating the nonnegativity of entropy, is satisfied by the points above the $E$-axis. For a
given $\beta$, the Ineq.~\eqref{athermality_ineq} is satisfied by those points below the drawn line which is tangent
to the physical region, and goes through the point $\left(E(\tau_{\beta}) , S(\tau_{\beta}) \right)$ with slope $\beta$.}
	\label{fig_en_ent_diag}
\end{figure}

\begin{proposition}
The values $(E(\rho),S(\rho))$ form a closed convex subset of $\R^2$ as in Figure~\ref{fig_en_ent_diag}, where the lower boundary is the line $x_S=0$ and the upper boundary is the curve $\beta\mapsto (E(\tau_\beta),S(\tau_\beta))$ traced by the thermal states $\tau_\beta = Z_\beta^{-1} e^{-\beta H}$ for $\beta\in[-\infty,+\infty]$.
\label{ESconv}
\end{proposition}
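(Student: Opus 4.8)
The plan is to identify the energy-entropy diagram explicitly and then read off its properties. Write $\Sigma := \{(E(\rho),S(\rho)) : \rho \text{ a state on }\mathbb{C}^d\}$, let $E_{\min}\le E_{\max}$ be the smallest and largest eigenvalues of $H$, and set $S_{\max}(E) := \max\{S(\rho) : E(\rho)=E\}$. I would first argue that
\[
\Sigma = \{(E,S)\in\R^2 : E_{\min}\le E\le E_{\max},\ 0\le S\le S_{\max}(E)\},
\]
and then deduce closedness, convexity, and the description of the boundary.

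For the structural preliminaries: the map $\rho\mapsto(E(\rho),S(\rho))$ is continuous on the compact convex set of density operators, so $\Sigma$ is compact, hence closed; and every state has $E(\rho)\in[E_{\min},E_{\max}]$ and $S(\rho)\in[0,\log d]$. Fix $E\in[E_{\min},E_{\max}]$. The set $\{\rho:E(\rho)=E\}$ is compact, convex, and nonempty (it contains a pure state $|\psi_E\rangle$ of energy $E$, e.g.\ a suitable superposition of a lowest- and a highest-energy eigenstate), so $S_{\max}(E)$ is attained. By the \emph{Gibbs variational principle} — Klein's inequality gives $S(\rho)\le -\mathrm{Tr}[\rho\log\tau_\beta]=\beta E+\log Z_\beta = S(\tau_\beta)$ whenever $E(\tau_\beta)=E$ — the maximiser is a thermal state $\tau_{\beta(E)}$, where $\beta(E)\in[-\infty,+\infty]$ satisfies $E(\tau_{\beta(E)})=E$; such a $\beta(E)$ exists because $\beta\mapsto E(\tau_\beta)$ is continuous and decreases from $E_{\max}$ at $\beta=-\infty$ to $E_{\min}$ at $\beta=+\infty$. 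In the endpoint cases $\beta=\pm\infty$, where $\tau_\beta$ is the normalised projector onto the extreme eigenspace of $H$, one argues directly that every state with $E(\rho)=E_{\min}$ is supported on the ground eigenspace, so $S_{\max}(E_{\min})=\log(\text{its dimension})$, and symmetrically at $E_{\max}$.

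Next, to show each vertical slice is the full interval, fix $E$ and interpolate: $\rho_t := (1-t)\,|\psi_E\rangle\langle\psi_E| + t\,\tau_{\beta(E)}$ has $E(\rho_t)=E$ for all $t\in[0,1]$ since both terms have energy $E$, while $t\mapsto S(\rho_t)$ is continuous with $S(\rho_0)=0$ and $S(\rho_1)=S_{\max}(E)$; the intermediate value theorem then realises every value in $[0,S_{\max}(E)]$. Together with the trivial bounds $0\le S(\rho)\le S_{\max}(E(\rho))$ this proves the displayed formula for $\Sigma$. In particular the lower boundary is the segment of the line $x_S=0$ over $[E_{\min},E_{\max}]$, and the upper boundary is the graph of $S_{\max}$, which by construction is precisely $\{(E(\tau_\beta),S(\tau_\beta)):\beta\in[-\infty,+\infty]\}$ — the thermal curve of Figure~\ref{fig_en_ent_diag}.

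Finally, convexity: given $(E_0,S_0),(E_1,S_1)\in\Sigma$ witnessed by states $\rho_0,\rho_1$ and $t\in[0,1]$, the state $(1-t)\rho_0+t\rho_1$ has energy $E_t:=(1-t)E_0+tE_1$ and, by concavity of von Neumann entropy, entropy at least $S_t:=(1-t)S_0+tS_1$; hence $0\le S_t\le S_{\max}(E_t)$, and the slice description gives $(E_t,S_t)\in\Sigma$. (This also shows $S_{\max}$ is concave, which is the geometric content of the claim.) I expect the only delicate point to be the bookkeeping at the extreme energies — ensuring $\beta(E)$ is well defined and that the Gibbs principle still applies in the $\beta=\pm\infty$ limits where $\tau_\beta$ loses full support, and checking that a degenerate ground (or top) level produces exactly the vertical boundary segment drawn in Figure~\ref{fig_en_ent_diag}.
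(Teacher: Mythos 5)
Your proof is correct, and for the substance of the statement --- the identification of the upper boundary with the thermal curve and the filling-in of each vertical slice --- it follows essentially the same route as the paper: thermal states maximise entropy at fixed energy (the paper cites this as standard; you supply the Klein/relative-entropy argument, which is the right one), $S=0$ is reached by a pure superposition of extreme eigenstates, and intermediate entropies come from continuity of $S$ along a path inside the fixed-energy slice (the paper invokes path-connectedness abstractly; you use the explicit segment from $|\psi_E\rangle\langle\psi_E|$ to $\tau_{\beta(E)}$, which is the same idea made concrete). Where you genuinely diverge is convexity: the paper defers it to the dual characterisation by linear inequalities (Proposition~\ref{ESfacets}), whereas you prove it directly from concavity of the von Neumann entropy together with linearity of $E$ and the already-established slice description. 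Your version is self-contained and makes the geometric content (concavity of $E\mapsto S_{\max}(E)$) explicit, at the cost of needing the slice description first; the paper's version gets convexity for free once the facet inequalities are in hand. You also explicitly address closedness (via compactness of the state space and continuity of $(E,S)$) and the degenerate endpoint slices at $E_{\min}$ and $E_{\max}$, neither of which the paper's proof spells out; both of your arguments there are sound, in particular the observation that $\mathrm{Tr}[(H-E_{\min}\Id)\rho]=0$ forces support on the ground eigenspace.
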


Here, we use Cartesian coordinates $x=(x_E,x_S)$ when speaking about a point in $\R^2$ that may or may not belong to the energy-entropy diagram, and we also write $x(\rho):=(E(\rho),S(\rho))$ for the point associated to a specific state $\rho$. Thus we have the components $x(\rho)_E = E(\rho)$ and $x(\rho)_S = S(\rho)$, and the curve of thermal states is given by $\beta\mapsto x(\tau_\beta)$.

\begin{proof}
	It is a standard fact that the states of maximal entropy for a given energy are precisely the thermal states. One way to see that this also holds for $\beta < 0$ is by using the fact that it holds for $\beta > 0$ and reversing the sign of the Hamiltonian.

	It remains to show that also every smaller value of the entropy is achievable for a given energy. Clearly $S=0$ is achievable, namely by considering a pure state in a suitable superposition of energy levels which has the desired expectation value of energy. Moreover, the set of all states of a given energy is a convex subset of all the density matrices, and in particular it is path-connected; since the map $\rho\mapsto S(\rho)$ is continuous, it follows that also its image under $S$ is path-connected, so that also all intermediate entropy values can be achieved for the given energy.

	Convexity follows from the alternative characterisation in terms of linear inequalities that we will derive as Proposition~\ref{ESfacets}.
\end{proof}

\begin{remark}
\label{sloperemark}
The curve of thermal states is parametrized by $\beta\mapsto (E(\tau_\beta),S(\tau_\beta))$, where
\begin{align}
	E(\tau_\beta) &= \tr{H \frac{e^{-\beta H}}{Z_\beta}} = - \frac{d\log Z_\beta}{d\beta}, \\
	S(\tau_\beta) &= - \tr{\frac{e^{-\beta H}}{Z_\beta} \log ( \frac{e^{-\beta H}}{Z_\beta} ) } =
	\beta E(\tau_\beta) + \log Z_\beta
\end{align}
Differentiating with respect to $\beta$ and collecting terms results in the fundamental thermodynamic relation $dS(\tau_\beta) = \beta\, dE(\tau_\beta)$. For the energy-entropy diagram, this implies that the parameter $\beta$ is precisely equal to the slope of the tangent at each point on the curve of thermal states.
\end{remark}

By virtue of being convex and topologically closed, one can describe the energy-entropy diagram also in a dual way by writing down all the linear inequalities that bound it. These inequalities are most conveniently stated in terms of the quantities
\beqn
	\label{athermality}
	A_\beta(x) := \beta x_E - x_S + \log Z_\beta,
\eeqn
so that $A_\beta(\tau_\beta) = 0$. Note that $A_\beta(x)$ is a linear function of $x$. 

For a given value of $\beta$, we call $A_\beta$ the \emph{$\beta$-athermality}\footnote{This terminology was suggested to us by Matteo Smerlak. Without the additive constant, $A_\beta$ has also been called the ``free entropy'' in~\cite{guryanova_thermodynamics_2015}.}, since it vanishes on the thermal state $x(\tau_\beta)$, and we think of $A_\beta(x)$ as a measure of how far $x$ is from being equal to $x(\tau_\beta)$. The $\beta$-athermality differs from the free energy $x_E - \beta^{-1} x_S$ only by an additional factor of $\beta$ and an additive constant. One of the reasons that we prefer using~\eqref{athermality} over the free energy is that on a state $\rho$, we can also neatly write it as the relative entropy distance to the thermal state,
\beqn
	\label{athermrelent}
	A_\beta(x(\rho)) = D(\rho\|\tau_\beta),
\eeqn
as one can see by writing out the definition of relative entropy and plugging in $\tau_\beta = Z_\beta^{-1} e^{-\beta H}$. This again justifies the term ``$\beta$-athermality''. For $\beta=0$, we obtain the negentropy $A_0(x(\rho)) = \log d - S(\rho)$.

We can now state the characterisation of the energy-entropy diagram by linear inequalities:

\begin{proposition}
	The energy-entropy diagram is the set of all points $x = (x_S, x_E) \in \R^2$ such that
	\begin{align}
	\label{entropy_ineq}
	x_S &\geq 0, \\
	\label{athermality_ineq}
	A_\beta(x) &\geq 0 \ \ \ \forall \, \beta \in \R.
	\end{align}
	\label{ESfacets}
\end{proposition}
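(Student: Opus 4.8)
The plan is to show that the energy-entropy diagram, which we already know (by Proposition~\ref{ESconv}) is a closed convex subset of $\R^2$ with lower boundary $x_S=0$ and upper boundary the thermal curve $\beta\mapsto x(\tau_\beta)$, is cut out exactly by the inequalities \eqref{entropy_ineq} and \eqref{athermality_ineq}. One direction is immediate: every physical point $x(\rho)$ satisfies $x_S = S(\rho)\geq 0$, and satisfies $A_\beta(x(\rho)) = D(\rho\|\tau_\beta)\geq 0$ by \eqref{athermrelent} and nonnegativity of relative entropy. So the energy-entropy diagram is contained in the set defined by \eqref{entropy_ineq}--\eqref{athermality_ineq}.

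For the reverse inclusion I would argue by duality. Since the diagram is closed and convex, it equals the intersection of all closed half-planes containing it; equivalently, a point $x$ lies in the diagram iff it is not separated from it by any line. So it suffices to show that every supporting line of the diagram is one of the lines $\{x_S = 0\}$ or $\{A_\beta(x) = 0\}$ for some $\beta\in\R$ (including the limiting cases $\beta=\pm\infty$ corresponding to ground-state and top-state vertical supporting lines, which one checks separately or absorbs into the limit). A supporting line has some slope; if the slope is $0$ it is the line $x_S = 0$ supporting the region from below (any horizontal line above touches the interior). If the slope is a finite nonzero value $\beta$, then by Remark~\ref{sloperemark} the tangent to the thermal curve at $x(\tau_\beta)$ has exactly slope $\beta$, and since the thermal curve is the upper boundary and is concave (this concavity should be recorded: $dS(\tau_\beta)=\beta\,dE(\tau_\beta)$ together with $dE(\tau_\beta)/d\beta = -\mathrm{Var}_{\tau_\beta}(H)\leq 0$ gives $d^2 S/dE^2 = d\beta/dE \leq 0$), that tangent line supports the whole diagram; but that tangent line is precisely $\{A_\beta(x)=0\}$, since $A_\beta$ is linear with gradient $(\beta,-1)$ and vanishes at $x(\tau_\beta)$. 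Hence every supporting half-plane of the diagram is of the form $\{x_S\geq 0\}$ or $\{A_\beta(x)\geq 0\}$, and intersecting them all recovers the diagram. This simultaneously delivers the convexity claim promised in the proof of Proposition~\ref{ESconv}.

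The main obstacle, and the step deserving care, is the treatment of the endpoints and degenerate cases of the thermal curve: what happens as $\beta\to+\infty$ (the ground-state energy, where if the ground state is degenerate the curve terminates at a point with $x_S = \log(\text{ground degeneracy}) > 0$ and a vertical supporting line appears, reflected in Figure~\ref{fig_en_ent_diag}) and as $\beta\to-\infty$ (the top energy level), plus the behaviour when $H$ is proportional to the identity. One must check that the family $\{A_\beta\geq 0\}_{\beta\in\R}$ together with $x_S\geq 0$ still carves out the correct region including these boundary segments — concretely, that the vertical supporting lines at minimum and maximum energy arise as limits of the tilted lines $A_\beta = 0$ as $\beta\to\pm\infty$, so no separate inequalities are needed. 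A clean way to handle this is to note $A_\beta(x)/\beta = x_E - \beta^{-1}x_S + \beta^{-1}\log Z_\beta \to x_E - E_{\min}$ as $\beta\to+\infty$ (using $\beta^{-1}\log Z_\beta \to -E_{\min}$), recovering the half-plane $x_E\geq E_{\min}$ in the limit, and symmetrically for $\beta\to-\infty$. With that limiting argument in place, the intersection of the half-planes in \eqref{entropy_ineq}--\eqref{athermality_ineq} is exactly the closed region described in Proposition~\ref{ESconv}, completing the proof.
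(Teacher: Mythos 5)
Your forward inclusion coincides with the paper's. For the converse, the paper does not use convex duality at all: given $x$ satisfying all the inequalities, it sends $\beta\to\pm\infty$ in $A_\beta(x)\geq 0$ to deduce $E_{\min}\leq x_E\leq E_{\max}$, picks the unique $\beta$ with $E(\tau_\beta)=x_E$, and reads off $x_S\leq S(\tau_\beta)$ from that single inequality; achievability of every point with $0\leq x_S\leq S(\tau_\beta)$ at energy $x_E$ is then supplied by the constructive part of Proposition~\ref{ESconv}. This is more elementary than your route and, crucially, does not presuppose convexity of the diagram. That matters because the paper establishes convexity in Proposition~\ref{ESconv} \emph{by appeal to} Proposition~\ref{ESfacets}; your argument takes ``closed and convex'' as input from Proposition~\ref{ESconv}, which is circular as the paper is organised. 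You could break the circle by promoting your parenthetical concavity computation ($dE(\tau_\beta)/d\beta=-\mathrm{Var}_{\tau_\beta}(H)\leq 0$, hence $d^2S/dE^2\leq 0$ along the thermal curve) into a self-contained proof that the region described in Proposition~\ref{ESconv} is convex, but this has to happen before the duality machinery is invoked.

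There is also a genuine gap in the duality step itself: it is not true that every supporting line of the diagram is $\{x_S=0\}$ or $\{A_\beta=0\}$. A compact convex planar set has, for each slope, \emph{two} supporting lines, and your classification produces only the upper one (the tangent to the thermal curve) for each finite nonzero slope. The lower supporting lines of nonzero slope pass through the corners $(E_{\min},0)$ and $(E_{\max},0)$: for instance the half-plane $\{x_S\geq x_E-E_{\max}\}$ contains the whole diagram and touches it at $(E_{\max},0)$, yet it is neither $\{x_S\geq 0\}$ nor of the form $\{A_\beta\geq 0\}$, since the latter always bounds $x_S$ from \emph{above}. These corner supports happen to be redundant --- any point they exclude already violates $x_S\geq 0$ or $A_\beta\geq 0$ for $|\beta|$ sufficiently large --- but your argument, which requires every separating supporting half-plane to belong to the listed family, does not show this. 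Replacing ``every supporting line is in the family'' by ``every point outside the diagram violates some listed inequality'' and checking the corner directions explicitly would close the proof; at that point the required computation is essentially the paper's $\beta\to\pm\infty$ limit.
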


\begin{proof}
	All the inequalities hold for an achievable point $x(\rho) = (E(\rho),S(\rho))$, since both the entropy and the relative entropy~\eqref{athermrelent} are nonnegative.

	In the other direction, we need to show that if $x\in\R^2$ satisfies all the claimed inequalities, then it lies in the energy-entropy diagram. So by assumption, we have $x_S\geq 0$ and
	\[
		\beta x_E - x_S \geq \beta E(\tau_\beta) - S(\tau_\beta)
	\]
	for all $\beta\in\R$. Taking $\beta\to +\infty$ and $\beta\to-\infty$ shows that we also must have $E_{\min}\leq x_E\leq E_{\max}$. Since there is a unique thermal state at every given energy in this range, there is a unique $\beta$ such that $E(\tau_\beta) = x_E$. Using this $\beta$, we obtain from the previous inequality
	\[
		x_S \leq S(\tau_\beta) + \beta(x_E - E(\tau_\beta)) = S(\tau_\beta),
	\]
	so that $x$ lies indeed below the curve traced by the thermal states in Figure~\ref{fig_en_ent_diag}.
\end{proof}

\begin{remark}
In Propositions~\ref{ESconv} and~\ref{ESfacets}, and in some of our upcoming results, we also consider negative temperatures,
that is, $\beta < 0$. This is due to the fact that our systems are finite; for infinite systems, there is no $E_{\max}$ and no thermal state at $\beta=0$, meaning that the diagram is unbounded to the right and bounded by the thermal curve which increases forever. While we expect our methods to still apply in this case, our present proofs only hold for finite systems.

While thermal states at a negative temperature do not usually arise as a result of thermalisation, they
still play an important roles in multiple physical effects (such as, for instance, in lasers, where coherent light amplification
is obtained through population inversion)~\cite{ramsey_negative_1956}. The main difference between thermal states at
$\beta > 0$ and $\beta < 0$, in our theory, is that the former are completely passive states from which we cannot extract
energy by means of unitary operations, while the latter are active states, from which energy can be extracted.

\end{remark}

\section{Thermodynamic variables and the convex cone of macrostates}
\label{convex_cone}

In this section, we introduce an additional macroscopic quantity to characterise the state of a thermodynamic
system, referred to as \emph{system size} or \emph{amount of substance}. With this parameter we can fully characterise thermodynamic transformations on any number of copies of the system, and later also allow for discarding subsystems, so that the number of systems involved changes. So let's consider what happens at the many-copies level asymptotically.

\begin{proposition}
	\label{ndiagram}
	For any $n\in\N$, the energy-entropy diagram of the $n$-copy Hamiltonian $H^{(n)}$ equals the energy-entropy diagram of $H$, scaled up by a factor of $n$.
\end{proposition}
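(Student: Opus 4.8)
The plan is to reduce the statement to the dual, inequality-based description of the energy-entropy diagram furnished by Proposition~\ref{ESfacets}, which holds for any fixed finite-dimensional Hamiltonian and hence applies verbatim to the $d^n$-dimensional system carrying $H^{(n)} = \sum_{k=1}^{n}\id^{\otimes(k-1)}\otimes H\otimes\id^{\otimes(n-k)}$. The starting point is that every ingredient entering Proposition~\ref{ESfacets} for $H^{(n)}$ is simply the $n$-fold rescaling of its single-copy analogue: since $H^{(n)}$ is a sum of commuting terms, the partition function multiplies, $Z_\beta^{(n)} = (Z_\beta)^{n}$, so the thermal state factorises, $\tau_\beta^{(n)} = \tau_\beta^{\otimes n}$, with $E(\tau_\beta^{(n)}) = n\,E(\tau_\beta)$ and $S(\tau_\beta^{(n)}) = n\,S(\tau_\beta)$, and therefore the $\beta$-athermality obeys $A_\beta^{(n)}(x) = \beta x_E - x_S + n\log Z_\beta = n\,A_\beta(x/n)$. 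Combined with the triviality $x_S\ge 0 \iff (x/n)_S\ge 0$, this shows that a point $x$ satisfies the defining inequalities~\eqref{entropy_ineq}--\eqref{athermality_ineq} for $H^{(n)}$ if and only if $x/n$ satisfies them for $H$; invoking Proposition~\ref{ESfacets} in both directions then gives exactly $(\text{diagram of }H^{(n)}) = n\cdot(\text{diagram of }H)$.

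To make the content concrete I would also record the two set inclusions directly. For ``$\supseteq$'': given a state $\rho$ on $\C^d$, the product state $\rho^{\otimes n}$ has $\tr{H^{(n)}\rho^{\otimes n}} = n\,E(\rho)$ by additivity of energy across non-interacting copies and $S(\rho^{\otimes n}) = n\,S(\rho)$ by additivity of the von Neumann entropy under tensor products, so the point $n\,x(\rho)$ is realised in the $n$-copy diagram. For ``$\subseteq$'': given an arbitrary state $\sigma$ on $(\C^d)^{\otimes n}$, nonnegativity of entropy gives $S(\sigma)\ge 0$, and the relative-entropy identity~\eqref{athermrelent} applied to $H^{(n)}$ gives $A_\beta^{(n)}(x(\sigma)) = D(\sigma\,\|\,\tau_\beta^{\otimes n}) \ge 0$ for every $\beta\in\R$; dividing through by $n$ and using $A_\beta^{(n)}(x) = n\,A_\beta(x/n)$ shows $x(\sigma)/n$ lies in the $H$-diagram.

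I do not anticipate a real obstacle: the argument rests only on multiplicativity of the partition function, additivity of energy and entropy on product states, and the already-proven Proposition~\ref{ESfacets}. The one point deserving a sentence of care is to check that Proposition~\ref{ESfacets} is indeed applicable to $H^{(n)}$ — i.e.\ that its hypotheses (a fixed finite-dimensional Hamiltonian, with well-defined $E_{\min}$ and $E_{\max}$, and a thermal curve running over $\beta\in[-\infty,+\infty]$) are met — which holds because $H^{(n)}$ acts on the finite-dimensional space $(\C^d)^{\otimes n}$ with $E_{\min}^{(n)} = n E_{\min}$ and $E_{\max}^{(n)} = n E_{\max}$, and whose thermal curve $\beta\mapsto x(\tau_\beta^{\otimes n}) = n\,x(\tau_\beta)$ is the $n$-fold dilation of the single-copy one.
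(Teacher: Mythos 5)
Your proof is correct. The inclusion $n\cdot(\text{diagram of }H)\subseteq(\text{diagram of }H^{(n)})$ via additivity of $E$ and $S$ on product states is exactly the paper's argument. For the reverse inclusion you route through the dual, inequality-based description of Proposition~\ref{ESfacets}---showing $Z_\beta^{(n)}=Z_\beta^{\,n}$ and hence $A_\beta^{(n)}(x)=n\,A_\beta(x/n)$, so the defining inequalities simply rescale---whereas the paper invokes the primal description of Proposition~\ref{ESconv}, observing that every thermal state of $H^{(n)}$ is the $n$-th tensor power of a thermal state of $H$, so the upper boundary curve is the $n$-fold dilation of the single-copy one. The two arguments are two faces of the same underlying fact (multiplicativity of the partition function over non-interacting copies), and both are equally short; there is no circularity in your version, since Proposition~\ref{ESfacets} is established independently of the present claim and applies verbatim to the finite-dimensional Hamiltonian $H^{(n)}$. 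Your dual route has the mild advantage of not leaning on the geometric picture of the diagram as the region between the axis and the thermal curve, and of making explicit the scaling law $A_\beta^{(n)}(x)=n\,A_\beta(x/n)$ for the athermality functionals, which is essentially the homogeneity property reused later in Proposition~\ref{conechar}. No gap.
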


\begin{proof}
	Since $E(\rho^{\otimes n}) = n E(\rho)$, and similarly for $S$, it is clear that the energy-entropy diagram of $H$, when scaled by $n$, is contained in the energy-entropy diagram of $H^{(n)}$. The converse follows from Proposition~\ref{ESconv}, because any thermal state of $H^{(n)}$ is an $n$-th tensor power of a thermal state of $H$.
\end{proof}

Together with Theorem~\ref{TheoSEmain}, this also implies that for every $n$-system state $\rho$ there is a single-system state
$\sigma$ such that $\rho\asmpequiv\sigma^{\otimes n}$
(that is, $\rho$ is asymptotically equivalent to $\sigma^{\otimes n}$), although $\rho$ itself may be arbitrarily far
from being a product state.

In order to keep track of $n=n(\rho)$, the number of copies of the system on which a state $\rho$ lives, it is useful to consider the \emph{triple} of numbers $(E(\rho),S(\rho),n(\rho)))\in\R^3$, for which we also write $y(\rho)$. Each component of this triple is an additive function of $\rho$, and therefore
\beqn
\label{yadd}
	y(\rho\otimes\sigma) = y(\rho) + y(\sigma).	
\eeqn
By Theorem~\ref{TheoSEmain}, the three components of $y(\rho)$ provide a complete classification of single-system and multi-system states in thermodynamics---with given single-system Hamiltonian $H$---up to asymptotic equivalence.

\newcommand{\Therm}{\mathsf{Therm}}

Now we could consider the set of all points $y=(y_E,y_S,y_n)$ that are of the form $y=y(\rho)$
for some state $\rho$, and call it the \emph{energy-entropy-size diagram} associated to the
Hamiltonian $H$. But all of our results are only up to asymptotic equivalence, so that we
effectively only consider states $\rho$ with $n(\rho) \gg 1$. Equivalently, we can also work with
small values of $n$, but then forget that $n$ is required to be an integer by pretending that the
system size can be an arbitrary nonnegative real number. We then still use the symbol ``$n$'',
although it now plays the role of an ``amount of substance'', just as in the ideal gas law $pV=nRT$.
In principle, converting between number of microsystems and amount of
substance involves rescaling by the Avogadro constant. This is another perspective on what we are doing here,
except that we choose to measure the amount of substance with the unit in which the Avogadro
constant is equal to 1. Based on this intuition, we thus define:

\begin{definition}
	The convex cone $\Therm(H)$ consists of all points $y\in\R^3$ that are of the form $y=n\cdot (x_E,x_S,1)$ for some $n\in\R_{\geq 0}$ and $(x_E,x_S)$ in the energy-entropy diagram.
\end{definition}

In other words, $\Therm(H)$ is the convex cone that we obtain by taking the energy-entropy diagram in $\R^2$ and applying the standard ``homogenisation'' trick for turning a convex set into a convex cone by adding an additional coordinate~\cite[p.~31]{polytopes}. We call a point $y\in\Therm(H)$ \emph{normalised} if $y_n = 1$. Every nonzero $y\in\Therm(H)$ is a unique scalar multiple of a normalised point, so that for most purposes it is sufficient to consider normalised points only (see Section~\ref{decomp}).

\begin{remark}
	\label{slicetherm}
	If we slice $\Therm(H)$ at constant third coordinate $n\in\N$ by considering all points of the form $(x_E,x_S,n)\in\Therm(H)$, then this set is precisely the energy-entropy diagram of $H^{(n)}$ thanks to Proposition~\ref{ndiagram}.
\end{remark}

\begin{remark}
	Taking every (multi-system) state $\rho$ to be represented by a point $y(\rho)\in\Therm(H)$ is a standard construction of thermodynamics: it corresponds to passing from the microstate to the macrostate. The thermodynamic variables of a macrostate are precisely\footnote{Of course this depends on which observables are considered to be conserved quantities. For us, as indicated by Theorem~\ref{SEthm}, energy is assumed to be the only observable that is conserved.} energy $E$, entropy $S$, and system size $n$, and the macrostate is specified completely by these three numbers. If one identifies the passage from microstate to macrostate with the information-theoretic many-copies limit, then our Theorem~\ref{TheoSEmain} offers a mathematically rigorous explanation for \emph{why} the macroscopic variables are exactly these three and no others. There \emph{are} many other extensive quantities that are invariant under energy-preserving unitaries---take the R\'enyi entropies or the variance of energy as examples.
These quantities would indeed be relevant also macroscopically if we had required an \emph{exact} conversion of $\rho^{\otimes n}$ into $\sigma^{\otimes n}$ for some $n$, possibly together with a sublinear ancilla. But our definition of many-copy equivalence allows for approximate conversions that become closer and closer to exact as $n\to\infty$. This is a more permissive notion of asymptotic equivalence, under which correspondingly fewer quantities are invariant, namely only the ones that are asymptotically continuous~\cite{synak-radtke_asymptotic_2006}. In the language of~\cite{coecke_resources_2014,fritz_resource_2015}, allowing such approximate conversions introduces an ``epsilonification''.

One may wonder how it is possible that the passage from microstate to macrostate within our idealised theory yields results that are so close to the standard one.
For example, the class of allowed operations considered in our model is extremely wide, and moreover our results are
valid only in the many-copy limit.
This is at least partly explained by noting that the many-copy limit is a faithful enough description (at the macroscopic level) since the
interactions between particles are small (they have an area scaling), compared to the extensive
quantities (which have a volume scaling), as the number of particles grows to infinity. Moreover, results such
as the von Neumann's quantum ergodic theorem~\cite{goldstein_ergodic_2010} help explain this phenomena.

\end{remark}

Returning to technical developments, we extend the $\beta$-athermalities from $\R^2$ to $\R^3$ by setting
\beqn
\label{Abeta}
	A_\beta(y) := \beta y_E - y_S + y_n \log Z_\beta.
\eeqn
On the energy-entropy diagram, which is embedded in $\Therm(H)$ as the set of all normalised points, this coincides with our previous definition of $A_\beta(x)$, and from there we have extended linearly.

On an actual state $\rho$, we can again express the $\beta$-athermality as a relative entropy distance,
\[
	A_\beta(y(\rho)) = D(\rho\|\tau_\beta^{\otimes n(\rho)}),
\]
where the $n(\rho)$ appears because one needs to consider the thermal state on a suitable number of copies of the system.

The characterisation of the energy-entropy diagram by linear inequalities extends easily to $\Therm(H)$:

\begin{proposition}
	The convex cone $\Therm(H)$ is the set of all $y=(y_E,y_S,y_n)\in\R^3$ such that $y_S\geq 0$ and $A_\beta(y)\geq 0$ for all $\beta\in(-\infty,+\infty)$.
	\label{conechar}
\end{proposition}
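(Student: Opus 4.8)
The plan is to read off the cone description from the linear-inequality description of the energy-entropy diagram already established in Proposition~\ref{ESfacets}, using that $\Therm(H)$ is by definition the homogenisation of that diagram. Write $D$ for the energy-entropy diagram and $C := \{y\in\R^3 : y_S\ge 0 \text{ and } A_\beta(y)\ge 0 \text{ for all } \beta\in\R\}$ for the candidate cone; the goal is $\Therm(H)=C$. Note at the outset that $C$ is automatically a closed convex cone, being an intersection of homogeneous closed half-spaces (each $A_\beta$ from \eqref{Abeta} is linear, as is $y\mapsto y_S$).

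For the inclusion $\Therm(H)\subseteq C$, I would use that both $y\mapsto y_S$ and $y\mapsto A_\beta(y)$ are linear, hence homogeneous of degree one, and that on a normalised point they reduce to their $\R^2$ counterparts, $A_\beta((x_E,x_S,1))=A_\beta(x)$. So if $y=n\,(x_E,x_S,1)$ with $n\ge 0$ and $(x_E,x_S)\in D$, then $y_S=n\,x_S\ge 0$ and $A_\beta(y)=n\,A_\beta(x)\ge 0$ by Proposition~\ref{ESfacets}, whence $y\in C$.

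For the reverse inclusion $C\subseteq\Therm(H)$, let $y\in C$. The one feature of $y$ not manifest in the defining inequalities is the sign of $y_n$, which I would pin down by specialising to $\beta=0$: since $Z_0=\tr{\Id}=d$, the inequality $A_0(y)\ge 0$ reads $y_n\log d\ge y_S\ge 0$, forcing $y_n\ge 0$ (here one uses $d\ge 2$, so that $\log d>0$). If $y_n>0$, put $x:=(y_E/y_n,\,y_S/y_n)$; then $x_S\ge 0$ and $A_\beta(x)=A_\beta(y)/y_n\ge 0$ for every $\beta$, so $x\in D$ by Proposition~\ref{ESfacets}, and hence $y=y_n\,(x_E,x_S,1)\in\Therm(H)$. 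If $y_n=0$, then $A_0(y)\ge 0$ forces $y_S=0$, after which $A_\beta(y)=\beta\,y_E$ for all $\beta$, so $A_\beta(y)\ge 0$ for all $\beta\in\R$ forces $y_E=0$; thus $y=0$ is the apex of $\Therm(H)$.

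The argument is short once Proposition~\ref{ESfacets} is in hand; the one place that needs a moment's care is the tip of the cone, i.e.\ ruling out spurious solutions with $y_n\le 0$. This is exactly what the $\beta=0$ inequality handles, and it is also where the (harmless) hypothesis $d\ge 2$ enters: for a trivial one-dimensional system $C$ would be a full line through the origin rather than the expected ray. A slicker but less elementary way to organise the same content is to observe that $D$ is compact (closed by Proposition~\ref{ESconv}, bounded since $E\in[E_{\min},E_{\max}]$ and $S\in[0,\log d]$), so its recession cone is trivial and the standard homogenisation of its facet description is precisely $C$, the constraint $y_n\ge 0$ being redundant as shown above.
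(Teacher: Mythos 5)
Your proof is correct and follows essentially the same route as the paper's: homogeneity of $y_S$ and $A_\beta$ reduces the forward inclusion to normalised points, and the converse uses $A_0(y)=y_n\log d - y_S\ge 0$ to pin down the sign of $y_n$, splitting into the cases $y_n>0$ (normalise and invoke Proposition~\ref{ESfacets}) and $y_n=0$ (force $y_S=y_E=0$). The only cosmetic difference is that in the apex case you use $\pm\beta$ directly where the paper takes $\beta\to\pm\infty$; your aside about $d\ge 2$ is a fair observation but not a substantive departure.
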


\begin{proof}
	Any point in $\Therm(H)$ satisfies these inequalities thanks to Proposition~\ref{ESfacets} together with the fact that $A_\beta(\lambda y) = \lambda A_\beta(y)$ for all $\lambda > 0$, so that it is sufficient to consider normalised points $y\in\Therm(H)$ only. Conversely, suppose that $y\in\R^3$ satisfies all these inequalities. Then from $y_S\geq 0$ and $A_0(y) = y_n \log d - y_S\geq 0$ we conclude $y_n\geq 0$. If it is the case that $y_n = 0$, then we conclude $y_S = 0$, and then also $y_E = 0$ from considering $A_\beta(y)\geq 0$ in the two limits $\beta\to\pm\infty$. Otherwise we have $y_n >  0$, and the point $y_n^{-1} (y_E,y_S)$ satisfies all the inequalities necessary to lie in the energy-entropy diagram by Proposition~\ref{ESfacets}, and therefore $y\in\Therm(H)$.
\end{proof}

What this says is that there are two kinds of additive resource monotones that are relevant to thermodynamics:
\begin{itemize}
	\item The entropy function $\rho\mapsto S(\rho)$;
	\item The $\beta$-athermality functions $\rho\mapsto A_\beta(\rho)$ indexed by $\beta\in(-\infty,+\infty)$.
\end{itemize}

In the terminology of~\cite[Section~7]{fritz_resource_2015}, these are \emph{extremal} monotones. There are two more extremal monotones that one obtains by considering $A_\beta$ as $\beta\to\pm\infty$, which results in the two functions
\[
	\rho\mapsto E(\rho) - n(\rho) E_{\min},\qquad \rho\mapsto n(\rho) E_{\max} - E(\rho).
\]
It follows by~\cite[Corollary~7.9]{fritz_resource_2015} that every other additive (and suitably continuous) monotone is a nonnegative linear combination or integral of these extremal ones.

\section{Macroscopic thermodynamics as a general probabilistic theory}
\label{decomp}

As we will illustrate in the upcoming sections, pretty much any resource-theoretic question about macroscopic thermodynamics can be formulated and answered within the convex cone picture that we have developed. However, since the cone $\Therm(H)\subseteq\R^3$ may be a bit challenging to visualise, it helps the intuition to represent any macrostate $y=(y_S,y_E,y_n)\in\Therm(H)$ by the corresponding normalised macrostate $x:=y_n^{-1}(y_E,y_S)$ in the energy-entropy diagram, equipped with a weight of $y_n$. In this picture, combining systems as in~\eqref{yadd} corresponds to taking a convex combination of normalised macrostates, in the sense that
\begin{equation}
\label{comb_macrostates}
	\frac{y(\rho\otimes\sigma)}{n(\rho\otimes\sigma)} = \frac{n(\rho)}{n(\rho\otimes\sigma)}\cdot\frac{y(\rho)}{n(\rho)} + \frac{n(\sigma)}{n(\rho\otimes\sigma)}\cdot\frac{y(\sigma)}{n(\sigma)},
\end{equation}
where $\tfrac{y(\rho\otimes\sigma)}{n(\rho\otimes\sigma)}$, $\tfrac{y(\rho)}{n(\rho)}$ and $\tfrac{y(\sigma)}{n(\sigma)}$ are all normalised macrostates. Normalising by dividing by system size turns the energy and entropy coordinates, which are extensive quantities, into intensive quantities. While extensive quantities combine across subsystems additively, the associated intensive ones combine across subsystems via convex combinations with coefficients given by the relative subsystem sizes (Figure~\ref{comb_fig}).
\begin{figure}[!ht]
\centering
\includegraphics[width=1\columnwidth]{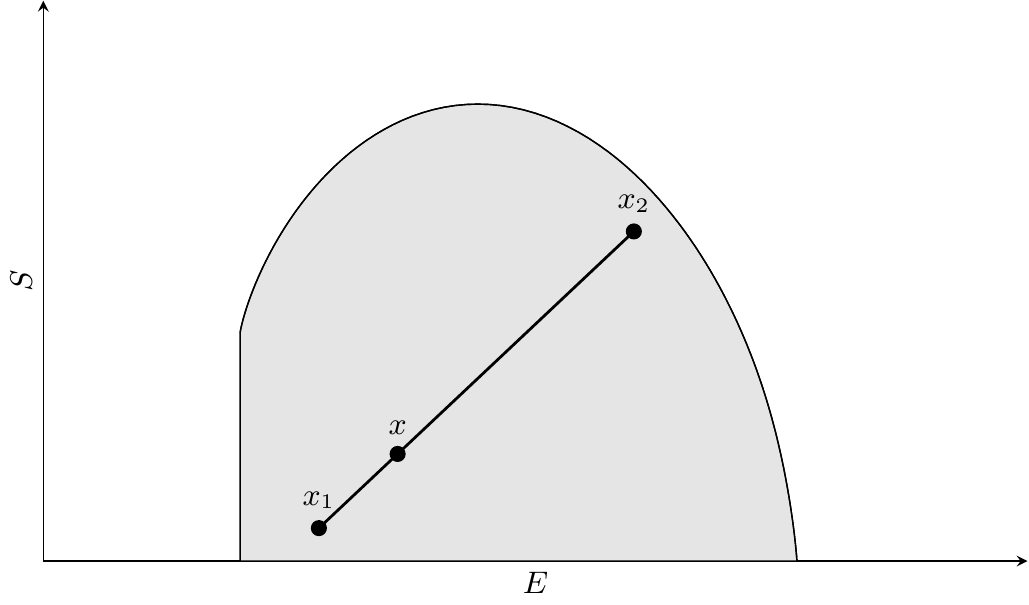}
\caption{Combining two systems in normalised macrostates $x_1$ and $x_2$ results in a total system in the normalised macrostate $x$, which is a convex combination of $x_1$ and $x_2$, where the coefficient of $x_1$---proportional to the distance between $x$ and $x_2$---is equal to the size of the first system relative to the total system, and similarly for $x_2$.}
\label{comb_fig}
\end{figure}
This implies that everything that we do with the convex cone can alternatively be done directly in the
energy-entropy diagram, by simply normalising the macrostates and keeping track of system size separately.

At the purely mathematical level, all of this is nicely analogous to the issue of normalisation of density matrices: it is usually more intuitive to assume the normalisation, and therefore one often normalises explicitly; but it is occasionally also advantageous to use unnormalised density matrices in order to keep track of the normalisation, which represents a ``probability-to-occur'', analogous to our system size coordinate. Conversely, it is often useful to decompose a given normalised density matrix into a convex combination of other ones, such as pure states; many of the puzzling features of quantum theory can be attributed to the fact that such a decomposition is highly non-unique\footnote{In the sense that two decomposition do in general not have a common refinement.}. The same applies to thermodynamics: it may occasionally be useful to write a normalised macrostate $x$ as a convex combination of other ones, or equivalently to decompose a given $y\in\Therm(H)$ into a sum $y=y_1 + y_2$ for $y_i\in\Therm(H)$. Of particular interest are decompositions into normalised macrostates that are extreme points of the energy-entropy diagram. Again such decompositions are highly non-unique, and we will argue that this non-uniqueness is among the essential features of thermodynamics and underlies e.g.~the possibility of constructing heat engines (Section~\ref{heat_engines}). What this means is that macroscopic thermodynamics is, purely mathematically, an example of a \emph{general probabilistic theory}~\cite{barrett_gpt_2007,barnum_entropy_2015,chiribella_entanglement_2015}. The physical meaning, however, is very different from how one usually thinks of a general probabilistic theory such as quantum theory, and in fact already applies at the level of \emph{classical} thermodynamics, and in fact already applies at the level of \emph{classical} thermodynamics.
\par
In more detail, Proposition~\ref{ESconv} implies that the extreme points of the energy-entropy diagram are the following:
\begin{itemize}
	\item The thermal macrostates $x(\tau_\beta)$, for $\beta\in[-\infty,+\infty]$, which are all different unless $H=0$.
	\item The pure macrostate $x(|E_{\min}\rangle\langle E_{\min}|)$, which coincides with the ground state $x(\tau_\infty)$ in the case of non-degeneracy, and the pure macrostate $x(|E_{\max}\rangle\langle E_{\max}|)$, which may similarly coincide with the maximally excited state $x(\tau_{-\infty})$.
\begin{figure}[!ht]
	\centering
	\includegraphics[width=1\columnwidth]{./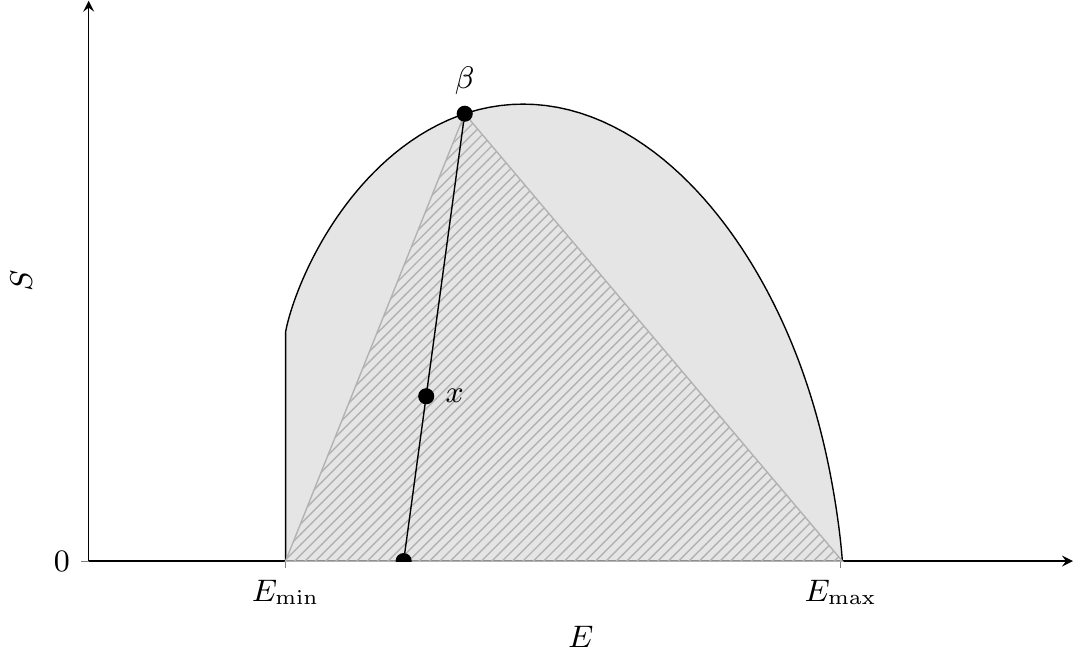}
	\caption{Decomposing a normalised macrostate into a combination of the thermal macrostate at temperature $\beta^{-1}$ and a pure macrostate, together with the set of all states that have such a decomposition at the given $\beta$ (hatched).}
	\label{convdecomp}
\end{figure}
\end{itemize}
By first writing every normalised macrostate as a convex combination of a thermal state and a pure macrostate (Figure~\ref{convdecomp}), and then decomposing the pure macrostate further into a combination of ground states and maximally excited states, we can represent every normalised macrostate as a convex combination of a suitable thermal state $x(\tau_\beta)$ with $x(|E_{\min}\rangle\langle E_{\min}|)$ and $x(|E_{\max}\rangle\langle E_{\max}|)$. For a given $\beta$, Figure~\ref{convdecomp} shows the region of normalised macrostates $x$ that have a decomposition of this form for fixed $\beta$. So we can write any $y(\rho) \in \Therm(H)$ as
\begin{align}
\frac{y(\rho)}{n(\rho)} &= c_\beta \cdot x(\tau_\beta) \nonumber \\
&+ c_{\min} \cdot x(|E_{\min}\rangle\langle E_{\min}|)  \\[4pt]
&+ c_{\max} \cdot x(|E_{\max}\rangle\langle E_{\max}|),\nonumber
\end{align}
for suitable weights $c_\beta,c_{\min},c_{\max}\in [0,1]$. If we choose rational approximations for these coefficients and suitably rescale the system size such that the product of each coefficient with $n(\rho)$ is an integer, then we even obtain an asymptotic equivalence,
\begin{align}
	\rho &\:\asmpequiv\: \tau_\beta^{\otimes c_\beta n(\rho)} \nonumber \\
	&\otimes |E_{\min}\rangle\langle E_{\min}|^{\otimes c_{\min}n(\rho)}  \\[3pt]
	&\otimes |E_{\max}\rangle\langle E_{\max}|^{\otimes c_{\max}n(\rho)}. \nonumber
\end{align}
In this way, any state looks macroscopically like a combination of a number of thermal states (at one temperature), ground states, and maximally excited states. The non-uniqueness in this decomposition lies in the possibility of choosing the temperature $\beta^{-1}$; the number of states of each kind in the decomposition will vary with that temperature. In practice, one can fix $\beta$ first and then determine the coefficients by equating energy, entropy and system size of the two sides of the equation. If these coefficients turn out to be nonnegative, then one has found a feasible decomposition. This is an instance of decomposing a state in a general probabilistic theory into pure states.

In summary, at the mathematical level, macroscopic thermodynamics is a general probabilistic theory. Some of the essential features of thermodynamics are intimately related to the non-uniqueness of decompositions of states into extremal states---the same phenomenon that is behind many of the mysterious aspects of quantum theory. But although the mathematics is an instance of the formalism of general probabilistic theories, the physical meaning is very different from the standard interpretation of the latter~\cite{barrett_gpt_2007}.

\section{The maximal extractable work}
\label{max_work}

We now use the methods developed in the previous sections to analyze the maximal work that can be extracted from many copies of a given state $\rho$. Following the standard definition of \emph{work}, we introduce a separate subsystem called \emph{battery} consisting of $\ell$ copies of the pure ground state,
\[
\omega_{\text{in}} = \rho^{\otimes n} \otimes \ket{E_{\min}}\bra{E_{\min}}^{\otimes \ell},
\]
where $\ell$ is yet to be determined. Extracting work from $\rho^{\otimes n}$ means that one devises a protocol which turns $\omega_{\text{in}}$ into a final state in which the battery system is in the maximally excited state,
\[
\omega_{\text{fin}} = \sigma^{\otimes n} \otimes \ket{E_{\max}}\bra{E_{\max}}^{\otimes \ell},
\]
where $\sigma$ is a state to be determined in such a way that the work, which is the energy exchanged with the battery per copy of $\rho$,
\[
W = \frac{\ell}{n}(E_{\max} - E_{\min}) = E(\rho) - E(\sigma)
\]
is maximised. By Theorem~\ref{TheoSEmain}, such a protocol exists if and only if $\omega_{\text{in}}$ and $\omega_{\text{fin}}$ have the same energy and entropy. Since the battery subsystem has no entropy, this means that we need to have $S(\rho) = S(\sigma)$; and moreover, in order to maximise $\ell$, we should choose $\sigma$ such that its energy is minimised. Having minimal energy for given entropy $S(\rho)$ means that we draw the horizontal line in the energy-entropy diagram through the macrostate $x(\rho)$, and determine the point where it hits the boundary; then we need to choose $\sigma$ to be a representative of this macrostate. In the regime where $S(\rho)$ is greater than the ground state degeneracy, we therefore have $\sigma = \tau_{\tilde{\beta}}$, where the
temperature $\tilde{\beta}^{-1} > 0$ is such that $S(\tau_{\tilde{\beta}}) = S(\rho)$. In this case, the maximal amount of work that we can extract asymptotically per copy of $\rho$ is given by
\begin{equation}
W_{\text{max}}(\rho) = E(\rho) - E(\tau_{\tilde{\beta}}),
\end{equation}
in agreement with existing results~\cite{alicki_entanglement_2013,allahverdyan_maximal_2004}.

In conclusion, the maximal total amount of pure energy that can be extracted from $\rho^{\otimes n}$ can be neatly read off the energy-entropy diagram: it is given by the horizonal distance between the macrostate $x(\rho)$ and the curve of thermal states,
multiplied by the system size $n$.

\section{Work and heat}
\label{work_heat_sec}

As we have seen in the previous section, coupling a system to a battery allows for the exchange of energy with the battery. So in contrast to the situation of Theorem~\ref{TheoSEmain}, this means that now only entropy is a conserved quantity, and one can move between any two macrostates which are on the same horizontal line in the energy-entropy diagram.

So what can we do in order to move between states that are not even on the same horizontal line? In analogy with adjoining a battery, it is natural to do this by adding a thermal reservoir of finite size, with which the original system can then exchange \emph{heat}. So while energy exchanged with the battery is what we consider work, energy exchanged with the thermal reservoir is our definition of heat. Assuming an environment to consist of a reservoir plus a battery is also motivated by the decomposition of any macrostate into a thermal and a pure part as in the previous section.

Again using our previous results, we will compute the work and heat required to transform any given state $\rho$ into any desired state $\sigma$, as a function of the initial and final temperature of the reservoir. When the size of the reservoir tends to infinity, or equivalently when its initial and final temperature coincide, the work and heat exchanged specialise to the standard ones in terms of the free energy.
\par
Getting to the technicalities, each subsystem consists of any number of microsystems with Hamiltonian $H$ as before. We assume that the thermal reservoir consists initially of $m$ copies of the thermal state at some temperature $\beta^{-1}_1$, and the battery of $\ell$ copies of a pure ground state. The initial state of the total system is therefore given by
\begin{equation} \label{init_state_work_heat}
\omega_{\text{in}} = \rho^{\otimes n} \otimes \tau_{\beta_1}^{\otimes m} \otimes \ket{E_{\min}}\bra{E_{\min}}^{\otimes \ell}.
\end{equation}
Since $\rho$ and $\sigma$ may have different entropy and average energy, turning the former
into the latter means that we also have to modify the reservoir and battery states. In
fact, in order to convert $\rho$ into $\sigma$, we apply Theorem~\ref{TheoSEmain} to the final state of all three subsystems,
which we assume to be close to
\begin{equation} \label{final_state_work_heat}
\omega_{\text{fin}} = \sigma^{\otimes n} \otimes \tau_{\beta_2}^{\otimes m} \otimes \ket{E_{\max}}\bra{E_{\max}}^{\otimes \ell},
\end{equation}
where now the reservoir is in a thermal state at a possibly different temperature $\beta_2^{-1}$, and
the battery is in the maximally excited state.
\par
It is worth noting that, in general, the final state of the reservoir does not have to be thermal, since the
interaction with the system might have driven the environment out of equilibrium. However, if the final
state of the reservoir is athermal, one would be able to extract additional work from it, while keeping
the entropy of this system unchanged (as we have previously shown in Sec.~\ref{max_work}).
According to our definition of a battery (as a work-exchanging device), we have that all the possible
work associated with the system transformation $\rho \rightarrow \sigma$ should be exchanged with it,
and none should be locked inside the battery. For this reason, it seems natural to ask the final state of the
reservoir to be thermal.
\par
Now in order for Theorem~\ref{TheoSEmain} to apply, we need to consider the asymptotic limit, that is,
when $n$, $m$, and $\ell \gg 1$. In this case, we can convert $\omega_{\text{in}}$ into $\omega_{\text{fin}}$
using the set of allowed operations if and only if they have the same average energy and entropy; conservation of system size is already guaranteed to hold. This gives two equations that we can solve for $m$ and $\ell$, resulting in
\begin{equation} \label{number_states}
\frac{m}{n} = \frac{S(\sigma) - S(\rho)}{S(\tau_{\beta_1}) - S(\tau_{\beta_2})},
\end{equation}
and a somewhat more complicated expression for $\tfrac{\ell}{n}$. So in order for $m$ to be nonnegative, we should have $\beta_1 < \beta_2$ if $S(\rho) > S(\sigma)$ and vice versa (assuming that $\beta_1,\beta_2 > 0$). Physically, this implies that when we dump entropy
from the system into the thermal reservoir, we increase its temperature, and vice versa, as we would
expect in the case of a finite size thermal reservoir. We refer to our thermal reservoir as being of finite size because, even if it is composed of $m\to\infty$ copies, the reservoir size is finite \emph{relative to} the system size $n$, in contrast to the case analysed in~\cite{brandao_resource_2013}. Similarly, a positive value $\ell > 0$ means that we achieve an extraction of work from the system; while if $\ell$ comes out negative, then we can make it positive by taking the initial state of the battery to be $\ket{E_{\max}}^{\otimes\ell}$, and the final one $\ket{E_{\min}}^{\otimes \ell}$, which corresponds to an injection  of work into the system. For simplicity, we focus on the case that $m,\ell > 0$ with $\omega_{\text{in}}$ and $\omega_{\text{fin}}$ as above, while the other cases are analogous.
\par
We can now evaluate the work extracted and heat provided during the state transformation.
We identify these two quantities with, respectively, the energy difference between the final and
initial $\ell$ copies of pure states, and with the energy difference between the initial and final $m$ copies
of thermal states. Thus, work is the energy stored inside the pure states, and heat is the energy
exchanged with the thermal states. Using the result of Eq.~\eqref{number_states}, we obtain the
following expressions for the work extracted and the heat provided per copy of $\rho$
and $\sigma$,
\begin{align}
\label{work_def}
W_{\beta_1, \beta_2}(\rho \rightarrow \sigma) &=
\frac{\ell}{n} \, \left( E_{\max} - E_{\min} \right) \nonumber \\
&=\left( E(\rho) - E(\sigma) \right) \\
&- \frac{E(\tau_{\beta_1}) - E(\tau_{\beta_2})}{S(\tau_{\beta_1}) - S(\tau_{\beta_2})} \left( S(\rho) - S(\sigma) \right),\nonumber \\[5pt]
\label{heat_def}
Q_{\beta_1, \beta_2}(\rho \rightarrow \sigma) &=
\frac{m}{n} \, \left( E(\tau_{\beta_1}) - E(\tau_{\beta_2}) \right) \nonumber \\
&= \frac{E(\tau_{\beta_1}) - E(\tau_{\beta_2})}{S(\tau_{\beta_1}) - S(\tau_{\beta_2})} \left( S(\sigma) - S(\rho) \right).
\end{align}
These quantities depend on the initial and final system state, but also on the initial and final temperature of the reservoir. Our definition of work and heat is consistent with the
first law of thermodynamics, since we have $\Delta E(\rho \rightarrow \sigma) =
Q_{\beta_1, \beta_2}(\rho \rightarrow \sigma) - W_{\beta_1, \beta_2}(\rho \rightarrow \sigma)$,
where $\Delta E = E(\rho) - E(\sigma)$ is the average energy difference between the final and initial state of the
system, independently of $\beta_1$ and $\beta_2$.
\par
These equations for work and heat are similar to the standard ones. In fact, work is given by the free energy difference between $\rho$ and $\sigma$,
for an external effective temperature $\beta_{\text{eff}}^{-1}$ depending on the initial and
final temperatures of the thermal reservoir,
\begin{equation} \label{beta_eff}
\beta_{\text{eff}}(\beta_1, \beta_2) =  \frac{S(\tau_{\beta_1}) - S(\tau_{\beta_2})}
{E(\tau_{\beta_1}) - E(\tau_{\beta_2})},
\end{equation}
so that
\begin{equation}
\label{final_work_def}
W_{\beta_1, \beta_2}(\rho \rightarrow \sigma) = \beta_{\text{eff}}^{-1}
\left( A_{\beta_{\text{eff}}}(\rho) - A_{\beta_{\text{eff}}}(\sigma) \right).
\end{equation}
In the same way, the equation for heat is equal to the standard one, for the same
effective temperature $\beta_{\text{eff}}^{-1}$,
\begin{equation}
\label{final_heat_def}
Q_{\beta_1, \beta_2}(\rho \rightarrow \sigma) = \beta_{\text{eff}}^{-1}
\left( S(\sigma) - S(\rho) \right).
\end{equation}
This equation can also be seen as a non-infinitesimal generalisation of the fundamental thermodynamic relation $dQ = \beta^{-1} dS$. In other words, by the defining Eq.~\eqref{beta_eff}, the effective temperature $\beta_{\text{eff}}^{-1}$ can be visualised as a slope in the energy-entropy diagram, as in Figure~\ref{beta_eff_fig}.
\begin{figure}[!ht]
\centering
\includegraphics[width=1\columnwidth]{./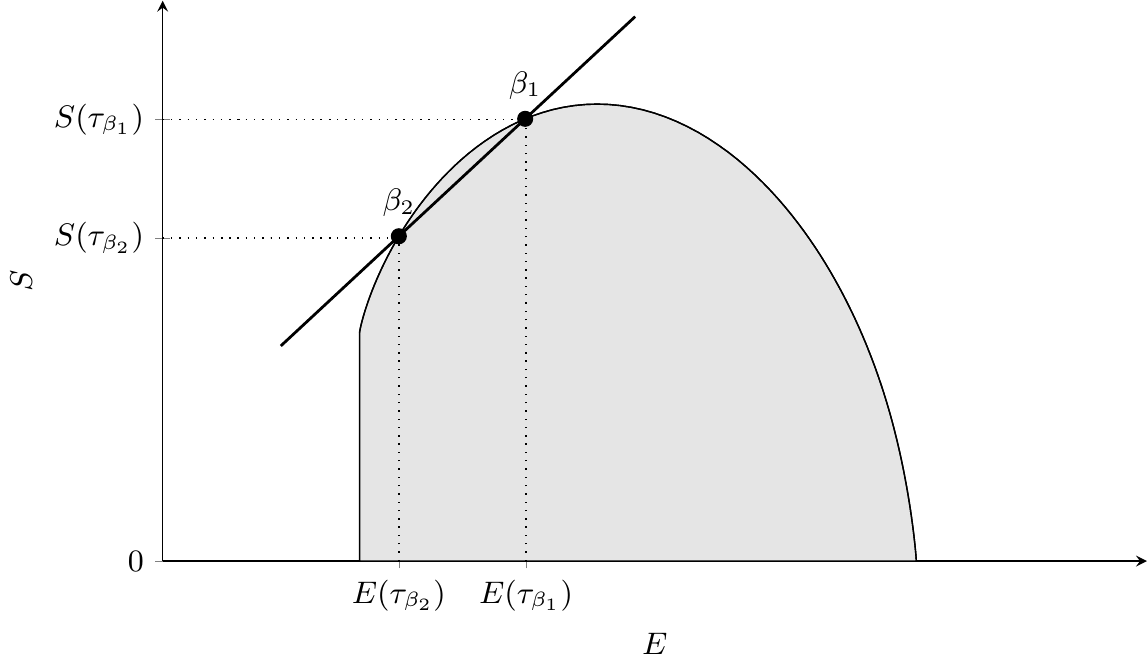}
\caption{The visualisation of the effective inverse temperature of the reservoir, $\beta_{\text{eff}}$.
The thermal reservoir initially consists of $m$ copies of $\tau_{\beta_1}$, which are each turned into
$\tau_{\beta_2}$ by the state transformation.
 The value of $\beta_{\text{eff}}$ is given by the slope of the line connecting the two corresponding points in the energy-entropy diagram. When $\beta_2 = \beta_1 + \varepsilon$, for $| \varepsilon | \rightarrow 0$, the two points get
closer and closer, and the line approaches the tangent to the curve of thermal states. In this case,
$\beta_{\text{eff}} = \beta_1 + O(\varepsilon)$, by Remark~\ref{sloperemark}.}
\label{beta_eff_fig}
\end{figure}
\par
Let us now consider the limiting case of an infinite reservoir, so that the reservoir temperature changes only infinitesimally.
In this case, we have $\beta_2 = \beta_1 + \varepsilon$, where $| \varepsilon | \ll 1$.
Then, it is straightforward to show that $\beta_{\text{eff}} = \beta_1 + O(\varepsilon)$,
and the work and heat we obtain are equal to the standard ones (up to first order in
$\varepsilon$), that is,
\begin{align*}
W_{\text{standard}}(\rho \rightarrow \sigma) & = \beta_1^{-1}
\left( A_{\beta_1}(\rho) - A_{\beta_1}(\sigma) \right) + O(\varepsilon),\\[4pt]
Q_{\text{standard}}(\rho \rightarrow \sigma) & = \beta_1^{-1} \left( S(\sigma) -
S(\rho) \right) + O(\varepsilon).
\end{align*}
Moreover, we find from Eq.~\eqref{number_states}
that, when we \emph{want} the temperature change to be only $\eps\ll 1$, then the required size of the thermal reservoir
per copy of the system $S$ tends to infinity, according to
\begin{equation}
\frac{m}{n} = \frac{S(\sigma) - S(\rho)}{\beta_1 \langle \Delta^2 H \rangle_{\tau_{\beta_1}}}
\frac{1}{\varepsilon} + O(1),
\end{equation}
where the expectation value in the denominator is the variance of energy in the state $\tau_{\beta_1}$,
or equivalently $\beta_1^{-2}$ times the heat capacity (at $\beta_1$) of a single system.
\par
We close this section with an application of our formalism, relevant for classical and quantum computation, and
information processing tasks. We consider the erasure of information, or Landauer's erasure, in the
scenario in which the surrounding environment has a finite size~\cite{reeb_improved_2014}.
In this scenario, the main system is acting as a memory, and its energy does not change during the
transformation, meaning that $E(\rho) = E(\sigma)$. Then we find that, when the thermal reservoir
has a finite size, the work required to erase the state $\rho$, and to map it into $\sigma$, is
\begin{equation}
W^{\text{erasure}}_{\beta_1, \beta_2}(\rho \rightarrow \sigma) =
\beta^{-1}_{\text{eff}}(\beta_1, \beta_2) \left( S(\rho) - S(\sigma) \right),
\end{equation}
which converges to the well-known value $\beta^{-1}_1 \left( S(\rho) - S(\sigma) \right)$ when the size of
the reservoir tends to infinity. If the initial state $\rho$ is maximally-mixed, and the final state
$\sigma$ is pure, we find that the work of erasure is $\beta^{-1}_{\text{eff}}(\beta_1, \beta_2) \log d$.

\section{Heat Engines}
\label{heat_engines}

We now show how the results of the previous sections can be used in order to analyse the efficiency of heat engines (and refrigerators) utilising finite size reservoirs. We do not assume any specific kind of engine consisting of a particular device or using particular mechanisms; instead, we utilise our formalism in order to derive the maximal efficiency of \emph{any} protocol operating on two finite size reservoirs. As before, our analysis is valid in the limit of many copies and in the case where all systems consist of (approximately) non-interacting microsystems with common Hamiltonian $H$. For example, we can now see how a heat engine secretly exploits our observation that thermodynamics is a general probabilistic theory (Section~\ref{decomp}): if we have a system consisting of two subsystems given by thermal states at different temperatures, we start with a macrostate which is a convex combinations of two extreme points of the energy-entropy diagram. Decomposing it in a different way into extreme points, we can therefore extract a certain number of copies of the maximally excited state---which plays the role of extracted work---together with a thermal state at an intermediate temperature.

For heat engines and refrigerators utilising reservoirs of finite size, we derive explicit expressions for the maximal efficiency depending on two effective temperatures (describing the hot and cold reservoirs, respectively). As per Eqs.~\eqref{eng_heat} and \eqref{eng_ref}, this optimal efficiency with finite-size reservoirs is always lower than the Carnot efficiency.
\par
Our model consists of the same tripartite system as in the previous section, but further specialised to the case where both the initial state $\rho$ and the final state $\sigma$ are themselves thermal. Hence the initial state is given by
\begin{equation} \label{init_state_engine}
\omega^{\text{engine}}_{\text{in}} = \tau_{\beta_\text{cold}}^{\otimes n} \otimes
\tau_{\beta_\text{hot}}^{\otimes m} \otimes \ket{E_{\min}}\bra{E_{\min}}^{\otimes \ell},
\end{equation}
where $\beta_\text{cold} > \beta_\text{hot}$. The final state, instead, is
\begin{equation} \label{final_state_engine}
\omega^{\text{engine}}_{\text{fin}} = \tau_{\beta_\text{less-cold}}^{\otimes n} \otimes
\tau_{\beta_\text{less-hot}}^{\otimes m} \otimes \ket{E_{\max}}\bra{E_{\max}}^{\otimes \ell},
\end{equation}
where $\beta_\text{cold} > \beta_\text{less-cold} > \beta_\text{less-hot} > \beta_\text{hot}$.
The engine uses the hot and cold reservoirs to extract work, but in the meanwhile it degrades
these reservoirs, assimilating their temperatures (because these are of finite size).
\par
Since everything that we do is reversible, one can consider both the transformation $\omega^{\text{engine}}_{\text{in}} \rightarrow \omega^{\text{engine}}_{\text{fin}}$ (heat engine) as well as the reverse $\omega^{\text{engine}}_{\text{fin}} \rightarrow \omega^{\text{engine}}_{\text{in}}$ (refrigerator). We are not concerned with the question of how to realise these transformations; Theorem~\ref{TheoSEmain} gives us necessary and sufficient conditions for when they are realisable, but does not make any statement about how to implement them, using a ``working body'' or otherwise. We only know that there exists some unitary acting on the global system together with a small number of ancilla systems which realises these devices to any desired degree of accuracy as $m,n,\ell\to\infty$. And moreover, there is no other device or mechanism that could do better.
\par
In order to evaluate the efficiency of these two devices, we need to evaluate the heat exchanged
with the hot reservoir, the work produced or utilised, and the heat exchanged with the cold reservoir. Due to reversibility, these quantities are the same for both devices (at least in absolute value). Using Eqs.~\eqref{work_def} and~\eqref{heat_def}, we find the heat exchanged with the hot reservoir $Q_{\text{hot}}$, and the work exchanged $W$,
\begin{align*}
Q_{\text{hot}} &= \beta^{-1}_{\text{eff}}(\beta_\text{hot}, \beta_\text{less-hot})
\big( S(\tau_{\beta_\text{less-cold}}) - S(\tau_{\beta_\text{cold}}) \big), \\
W &= \big( E(\tau_{\beta_\text{cold}}) - E(\tau_{\beta_\text{less-cold}}) \big) \\
&- \beta^{-1}_{\text{eff}}(\beta_\text{hot}, \beta_\text{less-hot})
\big( S(\tau_{\beta_\text{cold}}) - S(\tau_{\beta_\text{less-cold}}) \big),
\end{align*}
both per copy of the first reservoir system. On the other hand, since the system $S$
is now the cold reservoir, the heat $Q_{\text{cold}}$ exchanged with it per copy is equal to
\[
Q_{\text{cold}} = E(\tau_{\beta_\text{less-cold}}) - E(\tau_{\beta_\text{cold}}).
\]
\par
We can now evaluate the  efficiency of the heat engine, defined as $\eta_{\text{engine}}
= \frac{W}{Q_{\text{hot}}}$, and the efficiency of the refrigerator, $\eta_{\text{refrigerator}}
= \frac{Q_{\text{cold}}}{W}$. We find that the efficiencies are equal to
\begin{align}
\label{eng_heat}
\eta_{\text{engine}} &= 1 -
\frac{\beta_{\text{eff}}(\beta_\text{hot}, \beta_\text{less-hot})}
{\beta_{\text{eff}}(\beta_\text{cold}, \beta_\text{less-cold})},\\
\label{eng_ref}
\eta_{\text{refrigerator}} &= \left(
\frac{\beta_{\text{eff}}(\beta_\text{cold}, \beta_\text{less-cold})}
{\beta_{\text{eff}}(\beta_\text{hot}, \beta_\text{less-hot})} - 1 \right)^{-1},
\end{align}
where $\beta_{\text{eff}}$ was defined in Eq.~(\ref{beta_eff}). 
\begin{figure}[!ht]
\centering
\includegraphics[width=1\columnwidth]{./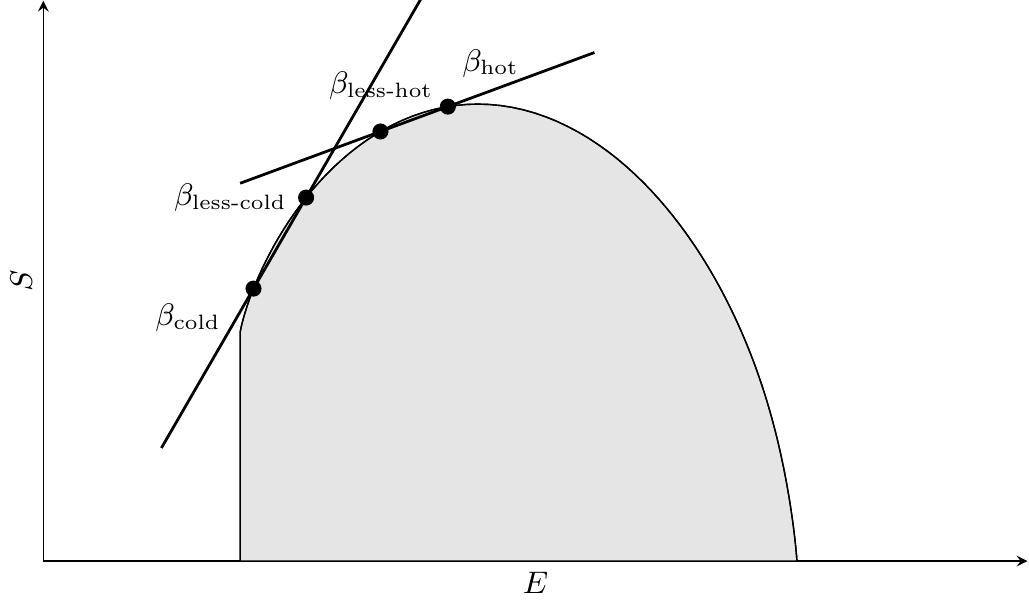}
\caption{The two effective temperatures that determine engine efficiency interpreted as slopes in the energy-entropy diagram. For $\beta_{\text{less-cold}}$ very close to $\beta_{\text{cold}}$ and $\beta_{\text{less-hot}}$ very close to $\beta_{\text{hot}}$, i.e.~when the reservoirs are very large compared to the battery size, then the lines approximate tangents and the resulting efficiency approaches the Carnot efficiency.}
\label{beta_eff2_fig}
\end{figure}
In terms of the interpretation of effective inverse temperatures as slopes in the energy-entropy diagram, we can understand these efficiencies as in Figure~\ref{beta_eff2_fig}: for example for the heat engine, the quotient of the slopes is always less than the quotient of the two tangent slopes at $\beta_{\text{hot}}$ and $\beta_{\text{cold}}$, respectively. This implies that $\eta_{\text{engine}} < 1 - \frac{\beta_\text{hot}}{\beta_\text{cold}}$, and similarly $\eta_{\text{refrigerator}} < \left( \frac{\beta_\text{less-cold}}{\beta_\text{less-hot}} - 1 \right)^{-1}$, so that both efficiencies are strictly lower than the Carnot efficiencies. This is due to the fact that the temperature of the two finite size reservoirs changes during the process. In the limit where the temperature of the two reservoirs changes only by an infinitesimal amount, both efficiencies approach Carnot's values.

\section{Optimal rates of conversion and how to compute them}
\label{opt_rates}

So far, we have only considered asymptotic equivalence of states, since only energy-preserving unitaries have been allowed. What do we get if we allow in addition that subsystems can be discarded? Building on our previous results, we will now give one possible answer to this question. In all cases, we assume that a state $\rho$ lives on a certain number $n(\rho)$ of microsystems as before, and when we are dealing with two (or more) states $\rho$ and $\sigma$, we do not assume $n(\rho) = n(\sigma)$.

\newcommand{\sub}{\mathrm{sub}}

\begin{lemma}[The asymptotic ordering $\succeq$]
	For given states $\rho$ and $\sigma$, the following are equivalent:
	\begin{enumerate}
		\item\label{tensordecomp} There is a state $\phi$ such that $\rho\asmpequiv\sigma\otimes\phi$ in the sense of Theorem~\ref{TheoSEmain}.
		\item\label{subsigma} There is a state $\sigma'\asmpequiv\rho$ such that $\Tr{\sub}{\sigma'} = \sigma$ for some subsystem that is not entangled with the rest.
		\item\label{SForder} $S(\rho) \geq S(\sigma)$ and $A_\beta(\rho) \geq A_\beta(\sigma)$ for all $\beta\in(-\infty,+\infty)$.
		\item\label{coneorder} We have $y(\rho) - y(\sigma)\in \Therm(H)$.
	\end{enumerate}
	\label{discardequiv}
\end{lemma}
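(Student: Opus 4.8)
The plan is to establish the cycle \ref{tensordecomp}~$\Rightarrow$~\ref{subsigma}~$\Rightarrow$~\ref{SForder}, together with \ref{SForder}~$\Leftrightarrow$~\ref{coneorder} and \ref{coneorder}~$\Rightarrow$~\ref{tensordecomp}. Two of these links are pure bookkeeping. For \ref{SForder}~$\Leftrightarrow$~\ref{coneorder} I would apply Proposition~\ref{conechar} to the point $w := y(\rho) - y(\sigma)$: its entropy component is $S(\rho) - S(\sigma)$, and by linearity of $A_\beta$ one has $A_\beta(w) = A_\beta(\rho) - A_\beta(\sigma)$, so ``$w \in \Therm(H)$'' is exactly the conjunction of the inequalities in \ref{SForder}. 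For \ref{tensordecomp}~$\Rightarrow$~\ref{subsigma} I would take $\sigma' := \sigma \otimes \phi$, which is asymptotically equivalent to $\rho$ by hypothesis; its $\phi$-register is a tensor factor, hence unentangled with the rest, and tracing it out returns $\sigma$ exactly.

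For \ref{subsigma}~$\Rightarrow$~\ref{SForder} I would first use Theorem~\ref{TheoSEmain} to pass from $\sigma' \asmpequiv \rho$ to $S(\sigma') = S(\rho)$, $E(\sigma') = E(\rho)$ and $n(\sigma') = n(\rho)$, hence $A_\beta(\sigma') = A_\beta(\rho)$ for every $\beta$; it then remains to show $S(\sigma') \geq S(\sigma)$ and $A_\beta(\sigma') \geq A_\beta(\sigma)$. The athermality inequality follows from monotonicity of the relative entropy under partial trace: writing $A_\beta(\sigma') = D(\sigma'\|\tau_\beta^{\otimes n(\sigma')})$ and splitting the reference state into a product over the $\sub$-register and its complement, discarding the $\sub$-factor gives $A_\beta(\sigma') \geq D(\Tr{\sub}{\sigma'}\|\tau_\beta^{\otimes n(\sigma)}) = A_\beta(\sigma)$. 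For the entropy inequality I would use the hypothesis that the discarded register is \emph{unentangled} with the rest: a separable bipartite state is majorised by each of its marginals, and since the von Neumann entropy is Schur-concave this gives $S(\sigma') \geq S(\Tr{\sub}{\sigma'}) = S(\sigma)$. (If the register is in fact in a product with the rest, as in the construction above, both inequalities become immediate from additivity of $S$ and of the relative entropy together with $S(\phi) \geq 0$ and $A_\beta(\phi) \geq 0$.)

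For \ref{coneorder}~$\Rightarrow$~\ref{tensordecomp} I would set $z := y(\rho) - y(\sigma) \in \Therm(H)$. Since $\rho$ and $\sigma$ are honest states on integer numbers of microsystems, $z_n = n(\rho) - n(\sigma)$ is an integer, and it is nonnegative because $z \in \Therm(H)$ (indeed, if $z_n = 0$ then, exactly as in the proof of Proposition~\ref{conechar}, $z_S = 0$ and then $z_E = 0$, so $z = 0$, and \ref{tensordecomp} holds trivially with $\phi$ empty). If $z_n \geq 1$, then by the definition of $\Therm(H)$ the normalised pair $z_n^{-1}(z_E, z_S)$ lies in the energy-entropy diagram of $H$, so by Proposition~\ref{ESconv} there is a single-qudit state $\phi_0$ with $E(\phi_0) = z_E/z_n$ and $S(\phi_0) = z_S/z_n$; I would then set $\phi := \phi_0^{\otimes z_n}$, giving $y(\phi) = z$. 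In either case $y(\sigma \otimes \phi) = y(\sigma) + y(\phi) = y(\rho)$, and Theorem~\ref{TheoSEmain} yields $\rho \asmpequiv \sigma \otimes \phi$, which is \ref{tensordecomp}.

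I do not expect a genuine obstacle, only two points to get right. The first is confirming that Theorem~\ref{TheoSEmain}, together with Propositions~\ref{ndiagram} and~\ref{ESconv}, really gives ``$\rho \asmpequiv \sigma$ if and only if $y(\rho) = y(\sigma)$'' for states on possibly different numbers of copies, so that the system-size coordinate of $y$ is genuinely conserved by $\asmpequiv$; this is what makes the athermality bookkeeping in \ref{subsigma}~$\Rightarrow$~\ref{SForder} and the cone-membership step in \ref{coneorder}~$\Rightarrow$~\ref{tensordecomp} work. The second is the entropy inequality in \ref{subsigma}~$\Rightarrow$~\ref{SForder}: for a register that is merely unentangled with the rest rather than in an exact product, plain subadditivity is not enough, and one genuinely needs the majorisation property of separable states. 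Both points are routine but should be stated explicitly.
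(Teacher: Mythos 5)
Your proposal is correct and follows essentially the same route as the paper: the same cycle of implications, the data-processing inequality for $A_\beta$, the cone characterisation of Proposition~\ref{conechar}, and Theorem~\ref{TheoSEmain} to close the loop. The only cosmetic difference is in \ref{subsigma}~$\Rightarrow$~\ref{SForder}, where you invoke the majorisation of a separable state by its marginals plus Schur-concavity, whereas the paper uses the equivalent (and slightly more direct) fact that the conditional entropy of an unentangled subsystem is nonnegative, i.e.\ $S(\sigma')\geq S(\Tr{\sub}{\sigma'})$.
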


Let us write $\rho \succeq \sigma$ for the ordering relation on states corresponding to these equivalent conditions. In the following, we will investigate this ordering relation a bit further.

\begin{proof}
	The implication from~\ref{tensordecomp} to~\ref{subsigma} is simply by taking $\sigma':=\sigma\otimes\phi$.

	Next, we show that~\ref{subsigma} implies~\ref{SForder}; this follows from the fact that the additive monotones $S$ and $A_\beta$ respect asymptotic equivalence $\asmpequiv$, and are nonincreasing under tracing out such subsystems. The former is a consequence of Theorem~\ref{TheoSEmain}, while the latter is a consequence of the no-entanglement assumption in the case of $S$ (the conditional entropy is nonnegative) and of the data processing inequality in the case of $A_\beta$,
	\begin{align}
		A_\beta(\rho) &= D(\rho\|\tau_\beta^{\otimes n(\rho)})
		\geq D\left(\Tr{\sub}{\rho}\Big\|\Tr{\sub}{\tau_\beta^{\otimes n(\rho)}}\right) \nonumber \\
		&= A_\beta(\Tr{\sub}{\rho}),
	\end{align}
	where the last equation holds because $\Tr{\sub}{\tau_\beta}$ is the thermal state on the subsystem.

	From~\ref{SForder} to~\ref{coneorder}, the point $y(\rho) - y(\sigma)$ satisfies all the inequalities of Proposition~\ref{conechar} by assumption, and therefore lies in $\Therm(H)$.

	From~\ref{coneorder} to~\ref{tensordecomp}, the assumption together with Remark~\ref{slicetherm} guarantees the existence of a state $\phi$ with $y(\phi) = y(\rho) - y(\sigma)$. Therefore $y(\rho) = y(\sigma\otimes\phi)$, and then $\rho\asmpequiv \sigma\otimes\phi$ is a consequence of Theorem~\ref{TheoSEmain}.
\end{proof}

The ``no entanglement'' requirement in condition~\ref{subsigma} seems a bit artificial, and it would be interesting to obtain results analogous to the upcoming ones for the ordering relation defined in the analogous way, but where one would be allowed to trace out an \emph{arbitrary} subsystem. We suspect that such a development would require generalisations of Theorems~\ref{Sthm} and~\ref{TheoSEmain}, where instead of characterising the asymptotic equivalence of states relative to energy-preserving unitaries, one would instead classify the asymptotic \emph{ordering} of states relative to energy-preserving unitaries and discarding subsystems. We currently do not have such a result and thus use the $\succeq$ relation from Lemma~\ref{discardequiv}.

\begin{definition}[{\cite[Eq.~(8.3)]{fritz_resource_2015}}]
	The \emph{maximal rate} of converting a state $\rho$ into a state $\sigma$ is given by
	\beqn
		\label{ratedef}
		R_{\max}(\rho\to\sigma) := \sup\:\left\{\: \frac{m}{n} \:\bigg|\: \rho^{\otimes n} \succeq \sigma^{\otimes m} \:\right\}
	\eeqn
\end{definition}

So roughly speaking, we now ask: if we try to convert many copies of $\rho$ into many copies of $\sigma$, then how many copies of $\rho$ do we need per copy of $\sigma$, where we may discard some additional ``junk'' states in the process?

Since we already have allowed for sublinear ancillas in the definition of asymptotic equivalence, this notion of maximal rate actually corresponds to the notion of \emph{regularised} maximal rate of~\cite[Section~8]{fritz_resource_2015}. Building on the methods that we have developed so far, it is not hard to write down a concrete formula for computing maximal rates:

\begin{theorem}
	The maximal rate from $\rho$ to $\sigma$ can be computed in two ways:
	\begin{enumerate}
		\item $R_{\max}(\rho\to\sigma)$ is equal to the value of $r$ at which the line in $\R^3$ defined by $r\mapsto y(\rho) - r y(\sigma)$ pinches the boundary of the cone $\Therm(H)$, so that
			\begin{align}
				R_{\max}(\rho\to\sigma) = \max \big\{\: &r \in \R_{\geq 0} \: \big| \nonumber \\
				&y(\rho) - r y(\sigma) \in \Therm(H) \:\big\}. \label{ratesup}
			\end{align}
		\item $R_{\max}(\rho\to\sigma)$ is also equal to the minimal ratio of the value of an additive monotone on $\rho$ versus its value on $\sigma$,
			\beqn
			\label{rateinf}
				R_{\max}(\rho\to\sigma) = \min\left\{ \frac{S(\rho)}{S(\sigma)}, \inf_{\beta\in (-\infty,+\infty)} \frac{A_\beta(\rho)}{A_\beta(\sigma)} \right\},
			\eeqn
			where the minimization is only over those fractions for which the denominator is nonzero.
	\end{enumerate}
	\label{ratethm}
\end{theorem}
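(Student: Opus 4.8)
The plan is to push everything through the cone description already established. By Lemma~\ref{discardequiv}, $\rho^{\otimes n}\succeq\sigma^{\otimes m}$ holds precisely when $y(\rho^{\otimes n})-y(\sigma^{\otimes m})\in\Therm(H)$, and additivity of $y$ turns this difference into $n\,y(\rho)-m\,y(\sigma)$; since $\Therm(H)$ is a cone, this lies in $\Therm(H)$ iff $y(\rho)-\tfrac{m}{n}y(\sigma)$ does. Hence
\[
R_{\max}(\rho\to\sigma)=\sup\left\{\,\tfrac{m}{n}\in\Q_{\geq 0}\ :\ y(\rho)-\tfrac{m}{n}y(\sigma)\in\Therm(H)\,\right\}.
\]
First I would analyse the real feasibility set $I:=\{\,r\in\R_{\geq 0}:y(\rho)-r\,y(\sigma)\in\Therm(H)\,\}$. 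It contains $0$ (because $y(\rho)\in\Therm(H)$), it is convex hence an interval (because $\Therm(H)$ is convex), it is closed (because $\Therm(H)$ is closed, Proposition~\ref{conechar}), and it is bounded since the third coordinate of $y(\rho)-r\,y(\sigma)$ is $n(\rho)-r\,n(\sigma)$, which membership in $\Therm(H)$ forces to be nonnegative, so $r\leq n(\rho)/n(\sigma)$. Writing $I=[0,r^\ast]$, density of $\Q$ in $[0,r^\ast]$ gives $\sup(\Q_{\geq 0}\cap I)=r^\ast$, and $r^\ast\in I$ is attained; this is exactly the geometric formula~\eqref{ratesup}.

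For the monotone-ratio formula~\eqref{rateinf} I would substitute the linear-inequality description of $\Therm(H)$ from Proposition~\ref{conechar} into this maximisation. Because $y\mapsto y_S$ and each $A_\beta$ are linear, $y(\rho)-r\,y(\sigma)\in\Therm(H)$ unfolds into the family of scalar conditions
\[
S(\rho)-r\,S(\sigma)\geq 0,\qquad A_\beta(\rho)-r\,A_\beta(\sigma)\geq 0\quad\text{for all }\beta\in(-\infty,+\infty).
\]
Since $S(\sigma)\geq 0$ and $A_\beta(\sigma)\geq 0$ on a genuine state, each of these conditions is either an upper bound on $r$ --- namely $r\leq S(\rho)/S(\sigma)$, resp.\ $r\leq A_\beta(\rho)/A_\beta(\sigma)$ --- when the relevant denominator is positive, or else automatically satisfied, because the numerator $S(\rho)$, resp.\ $A_\beta(\rho)$, is itself nonnegative, when the denominator vanishes. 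Thus the supremum of the $r$ satisfying all of them at once is the infimum of the active upper bounds, i.e.\ $\min\{S(\rho)/S(\sigma),\ \inf_\beta A_\beta(\rho)/A_\beta(\sigma)\}$ taken over those ratios with nonzero denominator; together with the previous paragraph this is the claim. Here one should note that the limiting monotones obtained as $\beta\to\pm\infty$ need not appear separately: the argument in the proof of Proposition~\ref{ESfacets} already shows that letting $\beta\to\pm\infty$ in the finite-$\beta$ inequalities yields $E_{\min}\leq x_E\leq E_{\max}$, so those constraints are implied; and if every denominator vanished one would need $d=1$, a degenerate case in which the statement is vacuous.

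I expect the real content to be modest given Lemma~\ref{discardequiv} and Propositions~\ref{conechar} and~\ref{ESfacets}; the one step that needs genuine care is the passage from the integer-indexed supremum defining $R_{\max}$ to an honest maximum over a real parameter, which is where closedness of $\Therm(H)$ together with density of $\Q$ does the work. The only other point requiring attention is the handling of vanishing denominators in~\eqref{rateinf}: one must verify that a zero value of $S(\sigma)$ or of some $A_\beta(\sigma)$ imposes no constraint at all rather than an ill-defined one, which is precisely where nonnegativity of the corresponding monotone evaluated on $\rho$ enters.
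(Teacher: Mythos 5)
Your proposal is correct and follows essentially the same route as the paper: reduce $\rho^{\otimes n}\succeq\sigma^{\otimes m}$ to cone membership via additivity of $y$ and Lemma~\ref{discardequiv}, then unfold $y(\rho)-r\,y(\sigma)\in\Therm(H)$ through Proposition~\ref{conechar} into the ratio bounds. You supply two details the paper's terse proof elides --- the passage from the rational-indexed supremum to an attained real maximum via closedness and boundedness of the feasibility interval, and the verification that vanishing denominators impose no constraint --- both of which are correct and worth making explicit.
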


\begin{proof}
	By additivity of $y$, a rational number $r=\tfrac{p}{q}\in\Q_{\geq 0}$ is an achievable rate if and only if $q y(\rho) - p y(\sigma) \in \Therm(H)$, or equivalently $y(\rho) - r y(\sigma)\in\Therm(H)$. This implies~\eqref{ratesup}.

	One gets~\eqref{rateinf} from~\eqref{ratesup} via Proposition~\ref{conechar}, since $y(\rho) - r y(\sigma)\in\Therm(H)$ is equivalent to $S(\rho) \geq r S(\sigma)$ together with $A_\beta(\rho) \geq r A_\beta(\sigma)$ for all $\beta\in(-\infty,+\infty)$. Hence the condition on $r$ is that it must be less than or equal to $\tfrac{S(\rho)}{S(\sigma)}$, and also less than or equal to $\tfrac{A_\beta(\rho)}{A_\beta(\sigma)}$ for every $\beta$, for those fractions for which the denominator is nonzero. The largest $r$ that satisfies this is precisely~\eqref{rateinf}.
\end{proof}

To understand Eq.~\eqref{ratesup} intuitively, it may help to normalise the macrostates and phrase the condition in terms of convex combinations in the energy-entropy diagram instead, as per Section~\ref{decomp}.

What makes the infimum over $\beta$ in Eq.~\eqref{rateinf} nontrivial to evaluate is the presence of the partition function term $\log Z_\beta$ in both the numerator and the denominator, due to Eq.~\eqref{Abeta}. Nevertheless, this is a very explicit formula with which one should be able to compute rates in practice. It is an instance of~\cite[Theorem~8.24]{fritz_resource_2015}, and the proof is correspondingly similar.
\section{Conclusion}
\label{concl}
Our resource theory for thermodynamics does not make use of an infinite
thermal reservoir. Therefore, it is suitable for analysing state transformations
both when the system is decoupled from the environment, e.g.~via Eq.~\eqref{rateinf},
and when the system is interacting with a finite reservoir, Eqs.~\eqref{final_work_def} and \eqref{final_heat_def}.
Moreover, the theory provides a rigorous mathematical explanation (through the
Thm.~\ref{TheoSEmain}) to the fact that, when dealing with
macroscopic thermodynamics, we can describe the state of a system with few
observables (for instance, energy and entropy). Our approach generalises the one
presented in \cite{brandao_resource_2013}, where asymptotic state
transformations are considered when an infinite reservoir is present.
\par
The results we obtain are valid in a specific regime delineated by several idealised assumptions,
such as the assumption that all energy-preserving unitaries are available, the presence of
many non-interacting and identical copies of the system, and the constraint of a fixed Hamiltonian
for each system. One can think of dropping some of these assumptions, and for example investigate
the theory when arbitrary states and interactions are allowed (often called the single-shot regime),
or when one has a much more realistic class of operations not requiring such fine grained control of
system and bath~\cite{perry_sufficient_2015}.
\par
The asymptotic equivalence result presented in Eq.~\eqref{asympt_eq} is obtained with the help
of a sublinear ancillary system. A priori, one might think that this additional system could be used
as an unbounded source of work, since we do not require the state of this ancilla to be restored
at the end of the process. However, to avoid the possibility of freely modifying the energy of the system
by exploiting the ancilla, we constrain the energy spectrum of the latter to be sublinear in the number
of copies of the main system.
\par
Recently, resource theories with multiple conserved quantities (even non-commuting ones),
have been investigated within the framework of quantum
thermodynamics~\cite{lostaglio_thermodynamic_2015, guryanova_thermodynamics_2015,
halpern_microcanonical_2015}. However, in these models, emphasis is put on
different notions of work, each of them related to a different conserved quantity. Our theory, on the
other hand, considers only energy\footnote{Or any other \emph{single} conserved quantity---if one
replaces e.g.~``energy'' by ``angular momentum'' throughout our work, our results are still equally
correct and meaningful.} We expect that our approach can be extended
more or less straightforwardly so as to cover multiple commuting conserved quantities as well;
generalising to a treatment of multiple non-commuting conserved quantities may
present new challenges.
\section*{Acknowledgement}
We would like to thank L\'idia del Rio, Joe Renes, Matteo Smerlak and Rob Spekkens for helpful discussions.
CS is supported by the EPSRC Centre for Doctoral Training in Delivering Quantum Technologies. Research
at Perimeter Institute is supported by the Government of Canada through Industry Canada and by the Province
of Ontario through the Ministry of Economic Development and Innovation.
\bibliographystyle{ieeetr}
\bibliography{refthermo}
\appendix
\section{Asymptotic equivalence of states under energy-preserving unitaries}
\label{asy_eq_theorems}
\subsection{Overview}
We show in this section that two states of a quantum system are asymptotically equivalent under
energy-preserving unitaries if and only if they have the same entropy and average energy (Theorem~\ref{TheoSEmain}).
Here, we consider two states asymptotically equivalent when one can turn many identical copies of one
state into the same number of copies of the other state, arbitrarily accurately, by applying a global unitary operation which
preserves energy. The precise statement is in Theorem~\ref{SEthm}.
\par
The main difficulty in our proof is in showing that when two states have same
energy and entropy, then they can be asymptotically mapped one into the other. To prove this implication,
we devise a protocol which converts (many copies of) one state into the other, provided that they have the same
energy and entropy. We now summarise the protocol, in order to provide a simple and physical idea of its mechanism to
the reader. We do this in two cases, one concerning the simpler case of trivial Hamiltonian, and then in general.
\par
When the system has trivial Hamiltonian, we can act on it by means of any unitary operation, and the only assumption
we have about the states $\rho$ and $\sigma$ is that they have the same entropy. Since we work in the asymptotic
regime, where we take the tensor product of many copies of these states, we can use the tools developed in
Shannon theory~\cite{shannon_mathematical_1948, schumacher_quantum_1995, nielsen_quantum_2010}.
In particular, due to the central limit theorem, we can replace the many-copy states $\rho^{\otimes n}$ and
$\sigma^{\otimes n}$ with their typical states, Eqs.~\eqref{rhotyp} and \eqref{sigmatyp}. The use of the typical states highly
simplifies the protocol, since in this way we can divide the Hilbert space into a small number of subspaces with common
properties. State conversion is achieved in the protocol by mapping the probability distribution of the initial typical state
into the one of the final typical state. This is done by introducing an ancillary system with trivial Hamiltonian,
whose size is $O(\sqrt{n \log n})$, in the maximally mixed state. This ancilla provides a source of randomness, and we
modify the probability distribution of the initial state by applying a global unitary operation on system and ancilla, and
tracing out the ancilla. However, a unitary operation can be used only if the transformation is reversible. To assure that
this is the case, another ancillary system is introduced, acting as a register, which allows for a dilation to a unitary. Again, the dimension of this second ancillary
system is $O(\sqrt{n \log n})$, and the Hamiltonian is trivial. The details of the protocol are in the proof of Theorem~\ref{Sthm}.
\par
When the system has non-trivial Hamiltonian, we have to reduce the set of allowed unitary operations to
the sole energy-preserving ones (the ones that commute with the Hamiltonian). With these operations, we have to 
devise a protocol which approximately converts many copies of $\rho$ into $\sigma$ when the two states have same entropy and energy.
The protocol which performs this asymptotic transformation is analogous to the one for trivial Hamiltonian.
The difference is that in this case we have to add an additional ancillary system with non-trivial Hamiltonian,
with which we can exchange both energy and coherence. This ancilla allows us to approximately implement any unitary 
on the system by applying an energy-preserving unitary on both system and ancilla. Due to the constraints on the
energy and entropy of the initial and final state, and to the central limit theorem, we achieve that the size of
this additional ancilla is $O(\sqrt{n \log n})$. Moreover, the spectrum of the ancillary Hamiltonian is bounded
by $O(n^{\frac{2}{3}})$, so that we modify the amount of energy only by a sublinear amount. The details of the
protocol are in the proofs of Lemma~\ref{coherence} and Theorem~\ref{SEthm}.

It is worth noting that
none of the three ancillary systems depends in any way on the state $\rho$ or $\sigma$, meaning that the transformation
can be performed with the same ancillae for any initial and final state.
\subsection{Asymptotic equivalence of quantum states}
Before getting to thermodynamics, it helps to consider an easier case first: two states of a quantum system are many-copies equivalent under unitaries if and only if they have the same entropy. This is what we show first in Theorem~\ref{Sthm}. Since we can always diagonalize, this is a purely classical problem, and we thus start out with some lemmas for classical information theory. The first one is a simple lemma on randomness extraction which we will apply afterwards to approximate the typical set of one distribution by a coarse-graining of the typical set of another. We use the min-entropy, the Hartley entropy, and the R\'enyi entropy at parameter $-\infty$,
\begin{align}
	H_\infty(p) &= -\log \max_x p_x, \\
	H_0(p) &= \log |\{x\:|\: p_x > 0 \}|, \\
	H_{-\infty}(p) &= -\log \min_x p_x.
\end{align}

\begin{lemma}
	Let $(X,p)$ and $(Y,q)$ be finite probability spaces. Then there exists a map $f:X\to Y$ such that
	\beqn
	\label{l1bound}
		\norm{f_*(p) - q}_1 \leq 2^{H_0(q) - H_\infty(p)},
	\eeqn
	and
	\beqn
	\label{fibresize}
	|f^{-1}(y)| \leq 2^{H_{-\infty}(p)}\left(2^{-H_\infty(q)} + 2^{-H_{\infty}(p)}\right)
	\eeqn
	for all $y\in Y$.
	\label{coarsegrain}
\end{lemma}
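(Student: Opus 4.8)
The plan is to reduce the statement to a purely combinatorial construction on the unit interval — essentially a discretised quantile coupling between $p$ and $q$ — and then read off all three bounds from elementary length comparisons. Throughout, write $p_{\max} := 2^{-H_\infty(p)} = \max_x p_x$, $p_{\min} := 2^{-H_{-\infty}(p)} = \min_{x\in\supp p} p_x$, $q_{\max} := 2^{-H_\infty(q)} = \max_y q_y$, and $k := |\supp q| = 2^{H_0(q)}$. First I would dispose of the degenerate cases: if some $p_x=0$ then $H_{-\infty}(p)=+\infty$ and~\eqref{fibresize} is vacuous, so such points of $X$ may be mapped arbitrarily (they carry no $p$-mass and so influence neither~\eqref{l1bound} nor the remaining, non-vacuous cases of~\eqref{fibresize}); hence I may assume $p$ is supported on all of $X$. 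Likewise I only ever map into $\supp q$, so that $f_*(p)(y)=q(y)=0$ for $y\notin\supp q$ automatically.

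Next, enumerate $X=\{x_1,\dots,x_N\}$ and $\supp q=\{y_1,\dots,y_k\}$ in some fixed order, and form the cumulative sums $P_0=0$, $P_j=p_{x_1}+\dots+p_{x_j}$ and $Q_0=0$, $Q_i=q_{y_1}+\dots+q_{y_i}$. These give two partitions of $(0,1]$ into half-open intervals $J_j:=(P_{j-1},P_j]$ of length $p_{x_j}$ and $I_i:=(Q_{i-1},Q_i]$ of length $q_{y_i}$. I then define $f(x_j):=y_i$ where $i$ is the index with the midpoint $m_j:=\tfrac12(P_{j-1}+P_j)\in I_i$, breaking the (measure-zero) ties when $m_j$ equals some $Q_i$ in any fixed way. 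The whole proof then rests on three trivial geometric facts: every $J_j$ has length $p_{x_j}\in[p_{\min},p_{\max}]$, every $I_i$ has length $q_{y_i}\le q_{\max}$, and the $J_j$ are pairwise disjoint.

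To get~\eqref{l1bound}, set $a_i:=f_*(p)(y_i)=\sum_{j:\,f(x_j)=y_i}p_{x_j}$. On one hand each $J_j$ mapping to $y_i$ has half-length at most $\tfrac12 p_{\max}$ and midpoint in $I_i$, so $J_j\subseteq(Q_{i-1}-\tfrac12 p_{\max},\,Q_i+\tfrac12 p_{\max}]$, whence $a_i\le q_{y_i}+p_{\max}$; on the other hand every point of $(Q_{i-1}+\tfrac12 p_{\max},\,Q_i-\tfrac12 p_{\max})$ lies in some $J_j$ whose midpoint is then forced into $I_i$, so this sub-interval is covered by the $J_j$ mapping to $y_i$ and $a_i\ge q_{y_i}-p_{\max}$. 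Thus $|a_i-q_{y_i}|\le p_{\max}$, and summing over the $k$ indices gives $\norm{f_*(p)-q}_1=\sum_i|a_i-q_{y_i}|\le k\,p_{\max}=2^{H_0(q)-H_\infty(p)}$. For~\eqref{fibresize}, the intervals $J_j$ with $f(x_j)=y_i$ are disjoint, each of length $\ge p_{\min}$, and all contained in the interval of length $q_{y_i}+p_{\max}\le q_{\max}+p_{\max}$ displayed above; comparing total length yields $|f^{-1}(y_i)|\le (q_{\max}+p_{\max})/p_{\min}=2^{H_{-\infty}(p)}(2^{-H_\infty(q)}+2^{-H_\infty(p)})$. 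I do not expect a real obstacle here; the only point requiring care is the choice of assignment rule — the midpoint (quantile) rule is precisely what makes the per-bin discrepancy symmetric and bounded by a single interval length $p_{\max}$, whereas a naive ``fill one bin at a time'' greedy rule would let the error accumulate across bins — plus the routine handling of zero-mass atoms and of midpoints landing exactly on a partition point, dealt with as above.
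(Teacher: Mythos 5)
Your construction is correct, but it is genuinely different from the paper's. The paper builds $f$ greedily: it runs through $x_1,x_2,\dots$ and assigns each atom to \emph{any} $y$ whose fibre mass has not yet reached $q_y$; the single invariant $\sum_{x\in f^{-1}(y)}p_x\leq q_y+\max_x p_x$ (each fibre overshoots its target by at most one atom) then yields both bounds at once --- the fibre-size bound by dividing by $\min_x p_x$, and the distance bound by summing the one-sided excesses over $y$ (the undershoots are controlled for free by normalisation, since $\norm{\cdot}_1$ is read there as total variation). Your quantile/midpoint coupling on $(0,1]$ instead gives the \emph{two-sided} per-bin estimate $|f_*(p)_y-q_y|\leq\max_x p_x$, hence bounds the full $\ell_1$ sum by the same quantity $2^{H_0(q)-H_\infty(p)}$, which is marginally stronger; it also makes $f$ monotone with respect to the chosen orderings, which could be convenient elsewhere but is not needed here. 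The price is a small amount of interval geometry that the greedy argument avoids entirely. One side remark of yours is slightly unfair to the greedy approach: a ``fill one bin at a time'' rule does \emph{not} let the error accumulate, precisely because each bin's overshoot is capped at a single atom $\max_x p_x$ independently of the order, so summing over bins gives the same $2^{H_0(q)}\cdot 2^{-H_\infty(p)}$; this is in fact how the paper proceeds. Your handling of zero-mass atoms and of the normalisation mismatch between $|Y|$ and $|\supp q|$ is fine and matches the paper's ``full support without loss of generality'' remark.
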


Here, $f_*(p)$ is the distribution on $Y$ that one obtains by coarse-graining $p$ via application of $f$,
that is, by gathering in different sets the elements of $X$, obtaining a new distribution over a
smaller space $Y$. The $\norm{\cdot}_1$ is the total variation distance, i.e.~the classical version of the trace distance.

\begin{proof}
	We choose an arbitrary enumeration of the elements of $X$ as $x_1,\ldots,x_n$, and construct $f$ in piecemeal by defining $f(x_1),\ldots,f(x_n)$ one at a time. At the $i$-th step, we define $f(x_i)$ to be equal to an arbitrary $y\in Y$ whose probability has not yet been completely covered by the $p_x$ that lie in the preimage $f^{-1}(y)$ of the partially defined $f$, in the sense that
	\[
		q_y > \sum_{x\in f^{-1}(y)} p_x,	
	\]
	where the sum is only over those $x\in X$ for which $f(x)$ has already been defined and is equal to $y$. Finding such a $y$ is always possible since the normalisation of $p$ equals the normalisation of $q$. The crucial property of the $f$ thus constructed is that the total probability in a fibre $f^{-1}(y)$ is never significantly larger than $q_y$,
	\beqn
	\label{fibrebound}
		\sum_{x\in f^{-1}(y)} p_x \leq q_y + \max_x p_x.
	\eeqn
	This implies $|f^{-1}(y)|\cdot \min_x p_x \leq q_y + \max_x p_x$, resulting in~\eqref{fibresize}. To bound the total variation distance, we also use~\eqref{fibrebound},
	\begin{align*}
		\norm{f_*(p)- q}_1 &= \sum_y \max\left\{0, \sum_{x\in f^{-1}(y)} p_x - q_y\right\} \\
		&\leq \sum_y \max_x p_x = |Y|\cdot \max_x p_x.
	\end{align*}
	Since we can assume $q$ to have full support without loss of generality, this is the desired inequality~\eqref{l1bound}.
\end{proof}

\newcommand{\polylog}[1]{\mathrm{polylog}(#1)}

Turning to quantum information, we use the term ``size'' of a system to talk about the logarithm of its Hilbert space dimension, i.e.~the number of qubits needed to realise it, and write $S(\rho)  = - \mathrm{Tr}\left[ \rho \log \rho \right]$ for the von Neumann entropy of a state $\rho$.

\begin{theorem}[Asymptotic classification of states]
	For states $\rho$ and $\sigma$ on any quantum system of dimension $d$, the following are equivalent:
	\begin{enumerate}
		\item\label{Aequale} The states have equal entropy, $S(\rho) = S(\sigma)$.
		\item\label{Aconversion-best} There exists an ancilla system of size $O(\sqrt{n \log n})$ with state $\eta$ as well as a unitary $U$ such that
			\beqn
			\label{AepsU}
			\norm{\Tr{\mathrm{anc}}{U (\rho^{\otimes n}\otimes\eta) U^\dag} - \sigma^{\otimes n}}_1 \stackrel{n\to\infty}{\longrightarrow} 0.
			\eeqn
		\item\label{Aconversion-worst} There exists an ancilla system of size $o(n)$, with states $\eta$ and $\nu$ as well as unitaries $U$ and $V$ such that
			\beqn
			\label{AepsUV}
				\norm{\Tr{\anc}{U (\rho^{\otimes n}\otimes\eta) U^\dag - V (\sigma^{\otimes n}\otimes\nu) V^\dag}}_1 \stackrel{n\to\infty}{\longrightarrow} 0.
			\eeqn
	\end{enumerate}
	\label{Sthm}
\end{theorem}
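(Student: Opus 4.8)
The plan is to establish the cycle of implications $(\ref{Aequale}) \Rightarrow (\ref{Aconversion-best}) \Rightarrow (\ref{Aconversion-worst}) \Rightarrow (\ref{Aequale})$. The step $(\ref{Aconversion-best}) \Rightarrow (\ref{Aconversion-worst})$ is immediate: take $V$ to be the identity on $\sigma^{\otimes n}$ and $\nu$ trivial (padding so that both sides share the same ancilla), using that $O(\sqrt{n\log n})$ is $o(n)$. For $(\ref{Aconversion-worst}) \Rightarrow (\ref{Aequale})$ I would use that the von Neumann entropy is unitarily invariant, so that $S\big(U(\rho^{\otimes n}\otimes\eta)U^\dagger\big) = nS(\rho) + S(\eta)$; combining subadditivity with the Araki--Lieb inequality and the fact that the ancilla has size $o(n)$ (so that both $S(\eta)$ and the entropy of the traced-out part are $o(n)$) gives $S\big(\Tr{\anc}{U(\rho^{\otimes n}\otimes\eta)U^\dagger}\big) = nS(\rho) + o(n)$, and likewise $nS(\sigma) + o(n)$ on the $\sigma$-side. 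Since the two reduced states on the $d^n$-dimensional system are $\varepsilon_n$-close in trace norm with $\varepsilon_n\to 0$, the Fannes--Audenaert continuity bound shows that their entropies differ by at most $\varepsilon_n\, n\log d + O(1) = o(n)$; hence $|S(\rho) - S(\sigma)|\cdot n = o(n)$, which forces $S(\rho) = S(\sigma)$.

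The substance is the achievability direction $(\ref{Aequale}) \Rightarrow (\ref{Aconversion-best})$. First I would diagonalise $\rho$ and $\sigma$; since arbitrary unitaries are allowed, the task becomes a purely classical statement about the eigenvalue distributions $p^{\otimes n}$ and $q^{\otimes n}$ on $[d]^n$, the only extra requirement being that the final state come out diagonal in the eigenbasis of $\sigma^{\otimes n}$. Next I would replace each many-copy state by its renormalised restriction to a typical set $T^\rho_n$, resp.\ $T^\sigma_n$, of width $w_n = \Theta(\sqrt{n\log n})$ in log-probability around the entropy --- wide enough that by Chebyshev (the law of large numbers for $-\log p$) the typical set carries probability $1 - o(1)$, and narrow enough that $\log|T^\sigma_n| \le nS(\sigma) + w_n$ while $H_\infty$ of the typical restriction of $p^{\otimes n}$ is at least $nS(\rho) - w_n - o(1)$. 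Then adjoin a maximally mixed ``randomness'' ancilla of $a_n = \Theta(\sqrt{n\log n})$ qubits with $a_n$ chosen larger than $2w_n$: this flattens the $\rho$-distribution so that its min-entropy becomes $H_\infty \ge nS(\rho) - w_n + a_n - o(1)$, which under $S(\rho) = S(\sigma)$ now strictly exceeds $\log|T^\sigma_n|$ by $\Theta(\sqrt{n\log n})$. Applying Lemma~\ref{coarsegrain} to the flattened typical $\rho$-distribution and the typical $\sigma$-distribution produces a map $f$ whose pushforward lies within $2^{H_0(q_{\mathrm{typ}}) - H_\infty(p'_{\mathrm{typ}})} \to 0$ of the target in total variation, and all of whose fibres have size at most $2^{O(\sqrt{n\log n})}$.

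To turn the coarse-graining $f$ into a unitary I would adjoin a pure ``register'' ancilla of $O(\sqrt{n\log n})$ qubits and dilate $f$ to the isometry $|x'\rangle|0\rangle \mapsto |f(x')\rangle|r(x')\rangle$, where $r(x')$ records the position of $x'$ inside its fibre $f^{-1}(f(x'))$ (well defined by the fibre-size bound); this is an isometry because the pair $(f(x'),r(x'))$ determines $x'$. Extend it to a global unitary $U$ on system-plus-ancilla --- possible by dimension counting --- and check that tracing out the ancilla returns a state diagonal in the eigenbasis of $\sigma^{\otimes n}$ whose distribution is $f_*$ of the flattened typical $\rho$-distribution. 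By the previous step this is within $o(1)$ of $q^{\otimes n}$, hence of $\sigma^{\otimes n}$, in trace norm, once one also accounts for the $o(1)$ errors incurred in passing to the typical restrictions. The total ancilla size is $a_n + O(\sqrt{n\log n}) = O(\sqrt{n\log n})$, as required, and the ancilla is manifestly independent of $\rho$ and $\sigma$.

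The hard part is the quantitative balancing in this last implication: the width $w_n$, the randomness-ancilla size $a_n$, and the register-ancilla size must be chosen so that simultaneously (i) the typical sets carry probability $1 - o(1)$, which needs $w_n = \omega(\sqrt n)$; (ii) the Lemma~\ref{coarsegrain} error $2^{H_0(q_{\mathrm{typ}}) - H_\infty(p'_{\mathrm{typ}})}$ vanishes, which forces $a_n$ to beat $2w_n$; (iii) the fibre bound keeps the register sublinear; and (iv) every ancilla has size $o(n)$, in fact $O(\sqrt{n\log n})$. One must also carefully track that the atypical ``tails'' and the renormalisation factors contribute only $o(1)$ in trace norm, and handle the bookkeeping needed to extend the isometry to a unitary on the whole system-plus-ancilla space (padding dimensions as necessary). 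All the genuinely information-theoretic content is already packaged in Lemma~\ref{coarsegrain} together with the central-limit/AEP estimates; what remains is to wire these together with the correct parameter choices.
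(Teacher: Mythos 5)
Your proposal is correct and follows essentially the same route as the paper's proof: the same cycle of implications, the same unitary-invariance/Araki--Lieb/Fannes continuity argument for the converse direction, and the same achievability construction via typical sets, a maximally mixed randomness ancilla, Lemma~\ref{coarsegrain}, and a register ancilla that dilates the coarse-graining to a bijection. The only cosmetic difference is that you define typicality by a window in log-probability where the paper uses strong (frequency) typicality, which is immaterial since both are used only through the resulting R\'enyi entropy bounds.
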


That is, up to things that happen on an ancilla of sublinear size, two states on a system are many-copies equivalent if and only if they have the same entropy. Condition~\ref{Aconversion-best} is a set of requirements on such a many-copies equivalence that we believe to be roughly minimal; in particular, the number of qubits needed to implement the ancilla system grows only barely faster than $O(\sqrt{n})$, and in fact the particular growth rate of $O(\sqrt{n\log n})$ is an arbitrary choice and can be replaced by any function that grows faster than $\sqrt{n}$. Condition~\ref{conversion-worst}, in contrast, is a more permissive notion of many-copy equivalence that is still strong enough to imply~\ref{Aequale}, but it does not provide a different physical intuition than~\ref{Aconversion-best}.

The superoperators $\Tr{\mathrm{anc}}{U ( \cdot \otimes \eta) U^\dag}$ form channels which are close to unitary in the sense of being implementable with only a sublinear ancilla; these are precisely the channels of subexponential Kraus rank. It may be interesting to study such channels in their own right, and there may be relations to~\cite{devetak_resource_2008}.

\newcommand{\diag}[1]{\textrm{diag}(#1)}
\newcommand{\typ}{\textrm{typ}}
\newcommand{\type}{\textrm{type}}

\begin{proof}
	The implication from~\ref{Aconversion-best} to~\ref{Aconversion-worst} is trivial.

	Assuming~\ref{Aconversion-worst}, the claim $S(\rho) = S(\sigma)$ can be proven as follows. Let $D$ be the dimension of the ancilla system. Then using the fact that adding or discarding the ancilla cannot change the entropy by more than $\log(D)$\footnote{More precisely, by the fact that the conditional entropy of the ancilla given the system is at most $\log(D)$ in absolute value.}, we obtain, writing $\eps$ for the left-hand side of~\eqref{AepsUV},
	\begin{widetext}
	\begin{align*}
		|S(\rho) - S(\sigma)| & = \frac{1}{n} |S(\rho^{\otimes n}) - S(\sigma^{\otimes n})| \leq \frac{1}{n}\left| S(\rho^{\otimes n} \otimes \eta) - S(\sigma^{\otimes n} \otimes \nu)\right| + 2\frac{\log(D)}{n} \\
		& = \frac{1}{n}\left| S\left(U^\dag(\rho^{\otimes n} \otimes \eta)U\right) - S\left(V^\dag(\sigma^{\otimes n} \otimes \nu)V\right)\right| + 2 \frac{\log(D)}{n} \\
		& \leq \frac{1}{n} \left| S\left( \Tr{\anc}{U^\dag(\rho^{\otimes n}\otimes\eta)U}\right) - S\left( \Tr{\anc}{V^\dag(\sigma^{\otimes n}\otimes\nu)V} \right) \right| + 4\frac{\log(D)}{n} \\
		& \stackrel{\eqref{AepsUV}}{\leq} \frac{1}{n} \left( \log(d^n) \eps + O(1) \right) + 4\frac{\log(D)}{n} = O(\eps) + O(n^{-1}) + 4\frac{\log(D)}{n}.
	\end{align*}
	\end{widetext}
	where the last estimate is by Fannes' inequality. Since $\eps\to 0$ as $n\to\infty$ while $D$ grows only subexponentially, it follows that $|S(\rho) - S(\sigma)|$ is smaller than any positive number, and therefore $S(\rho) = S(\sigma)$.
	
	To show that~\ref{Aequale} implies~\ref{Aconversion-best}, we can assume by unitary invariance that $\rho$ and $\sigma$ are diagonal in the same basis, where they are given by $\rho = \diag{p_1,\ldots,p_d}$ and $\sigma = \diag{q_1,\ldots,q_d}$. In other words, we are in a classical situation involving finite probability spaces with distributions $p=(p_1,\ldots,p_d)$ and $q=(q_1,\ldots,q_d)$, and we therefore use classical notation and terminology for the remainder of the proof, and write $n_i$ for the number of times that outcome $i$ occurs upon sampling from $p^{\otimes n}$ or $q^{\otimes n}$. The central limit theorem guarantees that for $p^{\otimes n}$, the set of outcomes that are strongly typical in the sense that
	\beqn
	\label{rhotyp}
		 n_i \in \left[ \left( n - \sqrt{n\log n} \right) p_i , \left( n + \sqrt{n\log n} \right) p_i \right]
	\eeqn
	for every $i=1,\ldots,d$ has a total probability that approaches $1$ as $n\to\infty$. Let $T_p$ denote this typical set of outcomes and $p_\typ$ the resulting normalised distribution on typical outcomes that one obtains by conditioning on typicality. Similarly, let $T_q$ be the strongly typical set for $q^{\otimes n}$, corresponding to outcome frequencies $n_i$ restricted by
	\beqn
	\label{sigmatyp}
		n_i \in \left[ \left( n - \sqrt{n\log n} \right) q_i , \left( n + \sqrt{n\log n} \right) q_i \right],
	\eeqn
	and $q_\typ$ the associated typical distribution. By bounding the lowest and the highest probability of any outcome in this strongly typical set, it is straightforward to show the following inequalities,
	\begin{align}
	\begin{split}
		\label{typbound}
		H_0(p_\typ) &\geq H_{\infty}(p_{\typ}) \geq n S(p) \left(1  - \sqrt{\frac{\log n}{n}} \right), \\
		H_0(p_\typ) &\leq H_{-\infty}(p_\typ) \leq n S(p)\left( 1 + \sqrt{\frac{\log n}{n}} \right),
	\end{split}
	\end{align}
	where we still write $S(p) = H_1(p)$ for the Shannon entropy, and the second inequality holds for sufficiently large $n$ where the modification due to the conditioning is negligible. The analogous bounds hold for $q_\typ$. Note that the individual probabilities of the typical outcomes may still vary by a factor of up to $2^{2\sqrt{n\log n}\, S(p)}$, so that the typical distributions $p_\typ$ and $q_\typ$ may still be far from uniform. The strong typicality inequalities~\eqref{rhotyp} and~\eqref{sigmatyp} themselves will not be used again; all that we need are the R\'enyi entropy bounds~\eqref{typbound}, and that the probability of typicality approaches $1$ as $n\to\infty$. 

	Now let $r_1$ be the uniform distribution on $3 \sqrt{n\log n}\, S(p)$ many ancilla bits, rounded to the closest integer; in the following, we ignore the irrelevant rounding error. By~\eqref{typbound}, this results in the bounds
	\begin{align*}
		n S(p) + 2 \sqrt{n\log n}\, S(p) &\leq H_{\infty}(p_\typ\otimes r_1) \\
		&\leq H_{-\infty}(p_\typ\otimes r_1) \\
		&\leq n S(p) + 4 \sqrt{n\log n}\, S(p).
	\end{align*}
	Hence by Lemma~\ref{coarsegrain}, we can find a map $f:T_p\times\{0,1\}^{3\sqrt{n\log n}\, S(p)}\to T_q$ such that
	\begin{align*}
		\norm{f_*(p_\typ\otimes  r_1) - q_\typ}_1 &\leq 2^{H_0(q_\typ) - H_\infty(p_\typ\otimes  r_1)} \\
		&\leq 2^{- \sqrt{n\log n}\, S(p)}
	\end{align*}
	by~\eqref{l1bound}, which decays superpolynomially in $n$. Thanks to~\eqref{fibresize}, $f$ can be implemented using a register ancilla of dimension at most
	\begin{align*}
		&2^{H_{-\infty}(p_\typ\otimes r_1)} \left( 2^{-H_\infty(q_\typ)} + 2^{-H_\infty(p_\typ\otimes r_1)}\right) \\
		&\leq 2^{5 \sqrt{n\log n}\, S(p)} + 2^{2 \sqrt{n\log n}\, S(p)} = 2^{O(\sqrt{n\log n})},
	\end{align*}
	which is initially taken to carry an arbitrary deterministic distribution $r_2$ and gets utilised to dilate $f$ to a bijection.

	We now put $ r :=  r_1 \otimes  r_2$, so that our total ancilla still has size $O(\sqrt{n\log n})$. We take $U$ to be given by the action of $f$ on the system and first ancilla, dilated to a bijection by the second ancilla. Since $f_*$ is contractive, we have
	\begin{align*}
		\norm{f_*(p\otimes r_1) - q}_1 &\leq \norm{p_\typ - p}_1 + \norm{q_\typ - q}_1 \\
		&+ \norm{f_*(p_\typ\otimes  r_1) - q_\typ}_1 \stackrel{n\to\infty}{\longrightarrow} 0,
	\end{align*}
	since each individual term tends to $0$. This establishes~\eqref{AepsU} in classical notation.
\end{proof}

Getting to thermodynamics, we now also want to take energy preservation into account. To this end, we develop a method to turn every unitary into an energy-preserving unitary, while achieving approximately the same conversion of states. This relies on a protocol modelled after~\cite[Appendix~E]{brandao_resource_2013}. For finite sets of numbers $\mathcal{A},\mathcal{B}\subseteq\R$, we consider the sumset $\mathcal{A} + \mathcal{B} := \{\: a + b \:|\: a\in\mathcal{A},\: b\in\mathcal{B}\:\}$, as studied in additive combinatorics~\cite{breuillard_doubling_2013}, and similarly also the difference set $\mathcal{A} - \mathcal{B} = \{\: a - b \:|\: a\in\mathcal{A},\: b\in\mathcal{B}\:\}$. Furthermore, we write $\norm{\mathcal{A}} := \max_{a\in\mathcal{A}} |a|$. And from now on, we also use $\rho\approx_\eps\sigma$ as a shorthand for $\norm{\rho - \sigma}_1 \leq \eps$.

\begin{lemma}[Achieving energy preservation]
	Let $0 < \delta < 1$ and suppose that $\mathcal{L},\mathcal{M}\subseteq \R$ are finite sets of numbers such that
	\beqn
	\label{Mbound}
		|\mathcal{M} + \mathcal{L}| \leq (1+\delta) |\mathcal{M}|,\qquad
		|\mathcal{M} - \mathcal{L}| \leq (1+\delta) |\mathcal{M}|
	\eeqn
	and $\norm{\mathcal{L}}\leq\norm{\mathcal{M}}$. Given a quantum system with Hamiltonian $H$, suppose that $\rho$ and $\sigma$ are states supported on energy levels in $\mathcal{L}$, and that there is a unitary $U$ such that $U \rho U^\dag \approx_\delta \sigma$. Then there is an ancilla system of size $O(\log|\mathcal{M}|)$ with $\norm{H_\anc}\leq 4\norm{\mathcal{M}}$ and state $\eta$ as well as an energy-preserving unitary $\tilde{U}$  on the joint system, that is
\[
	\left[ \tilde{U} , H + H_\anc  \right] = 0,
\]
such that 
\[
		\Tr{\anc}{\tilde{U}(\rho\otimes\eta)\tilde{U}^\dag}\approx_{4\delta} \sigma.
\]
\label{coherence}
\end{lemma}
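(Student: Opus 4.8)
The plan is to realise $\tilde U$ as a \emph{sliding} dilation of $U$: correlate the energy that $U$ transports inside the system with a compensating shift on a ``flat'' ancilla, so the joint map conserves total energy while the ancilla, being flat, is hardly disturbed. A preliminary reduction: since $\rho$ and $\sigma$ are supported on $\mathcal{S}:=\bigoplus_{E\in\mathcal{L}}\ker(H-E)$, one may replace $U$ by the unitary $U'$ sending an eigenbasis of $\rho$ to one of $\sigma$, matched in decreasing order of eigenvalue and extended arbitrarily on $\mathcal{S}^\perp$; by Mirsky's inequality $\norm{U'\rho U'^\dag-\sigma}_1=\sum_i|\lambda_i^{\downarrow}(\rho)-\lambda_i^{\downarrow}(\sigma)|\le\norm{U\rho U^\dag-\sigma}_1\le\delta$, so we may assume $U$ maps $\mathcal{S}$ to itself, whence its energy transfers lie in the finite set $\mathcal{L}-\mathcal{L}$.

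Now build the ancilla. Its energy levels are $\mathcal{M}':=\mathcal{M}+\mathcal{L}-\mathcal{L}$ with $H_\anc=\sum_{x\in\mathcal{M}'}x\,\proj{x}$; then $\norm{H_\anc}\le\norm{\mathcal{M}}+2\norm{\mathcal{L}}\le 3\norm{\mathcal{M}}\le 4\norm{\mathcal{M}}$, and $|\mathcal{M}'|=O(|\mathcal{M}|)$ by standard sumset inequalities applied to \eqref{Mbound}, so the ancilla has size $O(\log|\mathcal{M}|)$. Prepare it in the flat superposition $\eta=\proj{\psi}$, $\ket{\psi}=|\mathcal{M}|^{-1/2}\sum_{m\in\mathcal{M}}\ket{m}$, and (working in an eigenbasis $\ket{e}$ of $H$ with energies $E_e$) define on $\mathcal{S}\otimes\spann\{\ket{m}:m\in\mathcal{M}\}$
\[
 \tilde U\,\ket{e}\ket{m}\;=\;\sum_f \langle f|U|e\rangle\,\ket{f}\ket{m+E_e-E_f}.
\]
Each summand has total energy $E_e+m$ equal to that of the input, so $\tilde U$ preserves total energy, and it is an isometry because $\langle\tilde U(e,m)|\tilde U(e',m')\rangle=\mathbbm{1}[m+E_e=m'+E_{e'}]\,(U^\dag U)_{ee'}=\delta_{(e,m),(e',m')}$; its image lies in $\mathcal{S}\otimes\mathcal{H}_\anc$ since $m+E_e-E_f\in\mathcal{M}'$. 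Because $\tilde U$ maps the intersection of its domain with each total-energy eigenspace $E_\lambda$ of $H+H_\anc$ isometrically onto $(\mathrm{range}\,\tilde U)\cap E_\lambda$, the two orthocomplements inside $E_\lambda$ have equal dimension, so $\tilde U$ extends to a unitary on $\mathcal{H}_S\otimes\mathcal{H}_\anc$ preserving each $E_\lambda$, i.e.\ commuting with $H+H_\anc$.

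The remaining and, I expect, hardest step is the error bound $\norm{\Tr{\anc}{\tilde U(\rho\otimes\eta)\tilde U^\dag}-\sigma}_1\le 4\delta$. Since $\Tr{\anc}{(U\otimes\id)(\rho\otimes\eta)(U^\dag\otimes\id)}=U\rho U^\dag\approx_\delta\sigma$, it suffices to compare the output with $U\rho U^\dag$. Decomposing $U=\sum_\omega U_\omega$ by energy transfer and carrying out the partial trace gives $\Tr{\anc}{\tilde U(\rho\otimes\eta)\tilde U^\dag}=\sum_{\omega,\omega'}\langle\psi|\psi_{\omega'-\omega}\rangle\,U_\omega\rho U_{\omega'}^\dag$, where $\ket{\psi_c}$ is $\ket{\psi}$ shifted by $c$, so $\langle\psi|\psi_c\rangle=|\mathcal{M}\cap(\mathcal{M}+c)|/|\mathcal{M}|$. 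The flatness estimate from \eqref{Mbound} is $h(c):=1-\langle\psi|\psi_c\rangle\le\delta$ for all $c\in\mathcal{L}-\mathcal{L}$ (since $\mathcal{M}+\ell_1$ and $\mathcal{M}+\ell_2$ both lie in $\mathcal{M}+\mathcal{L}$, of size $\le(1+\delta)|\mathcal{M}|$). Writing $\langle\psi|\psi_{\omega-\omega'}\rangle=1-h(\omega)-h(\omega')+\langle\eta_\omega|\eta_{\omega'}\rangle$ with $\eta_\omega:=\ket{\psi_{-\omega}}-\ket{\psi}$, $\norm{\eta_\omega}^2=2h(\omega)\le2\delta$, the difference from $U\rho U^\dag$ becomes $-T\rho U^\dag-U\rho T^\dag+\sum_{\omega\omega'}\langle\eta_\omega|\eta_{\omega'}\rangle U_\omega\rho U_{\omega'}^\dag$ with $T:=\sum_\omega h(\omega)U_\omega$; one bounds each piece using $\norm{AB^\dag}_1\le\norm{A}_2\norm{B}_2$, unitarity of $U$, and $\sum_\omega U_\omega^\dag U_\omega=\id$. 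The subtle point is to avoid any factor of $\dim\mathcal{S}$: this is immediate when $\rho$ is incoherent in the energy eigenbasis, since then $\Tr{U_\omega^\dag U_{\omega'}\rho}=0$ for $\omega\ne\omega'$ and only diagonal-in-transfer terms survive, bounded by $\max_\omega h(\omega)\le\delta$ and $\max_\omega\norm{\eta_\omega}^2\le2\delta$ --- equivalently, $\Tr{\anc}{\tilde U(\rho\otimes\eta)\tilde U^\dag}-U\rho U^\dag=-(U\rho U^\dag)\odot H$ is a Schur product against a multiplier of norm $O(\delta)$. Controlling the off-diagonal-in-transfer contributions for general energy-coherent $\rho$ is where the real bookkeeping lies, and is the step I would budget the most care for.
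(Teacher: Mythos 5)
Your construction of the energy-preserving dilation is sound and essentially the same device as the paper's: a flat superposition on an ancilla whose level set is an approximate sumset-stabiliser of $\mathcal{L}$, with the unitary shifting the ancilla by the energy the system gains. The ancilla-size and $\norm{H_\anc}$ bounds go through, the isometry/extension argument is correct, and the Mirsky-type reduction to a $U$ preserving $\mathcal{S}$ is a nice (and valid) touch that the paper does not need. The gap is exactly where you flag it: the $4\delta$ error bound for general, energy-coherent $\rho$ and $\sigma$ is never established, and this is the entire content of the lemma (as the paper itself remarks, the incoherent case is easy). Worse, the route you sketch does not obviously close. Everything reduces to bounding $\norm{T}$ with $T=\sum_\omega h(\omega)U_\omega$, i.e.\ the norm of the Schur multiplier $[h(E_e-E_f)]$ applied to $U$. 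Knowing only $0\le h(c)\le\delta$ on $\mathcal{L}-\mathcal{L}$ does not give $\norm{T}=O(\delta)$: the function $c\mapsto\langle\psi|\psi_c\rangle$ is positive definite with value $1$ at the origin, so the multiplier $[1-h]$ has Schur norm $1$ and $[h]$ has Schur norm at most $2$ --- but the natural representation $1-g=\widehat{\delta_0-\mu}$ (with $\mu$ the autocorrelation measure of $\mathcal{M}$) has total variation close to $2$, not $O(\delta)$, and entrywise smallness of a multiplier does not control its Schur norm (upper-triangular $\delta$-patterns already cost $\delta\log\dim$). So the off-diagonal-in-transfer terms are a genuine obstruction, not bookkeeping.

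The paper's proof avoids this issue entirely by never having to bound those cross terms: it splits the conversion into two steps through an intermediate state $\tau$ that is incoherent across energy levels and has the same spectrum as $\rho$. In the step $\rho\to\tau$ the \emph{target} is incoherent, so after the partial trace the terms $P_{\lambda_1}\tau P_{\lambda_2}$ with $\lambda_1\neq\lambda_2$ vanish identically; in the step $\tau\to\sigma$ the \emph{source} is incoherent, so the terms $U P_{\mu_1}\tau P_{\mu_2}U^\dag$ with $\mu_1\neq\mu_2$ vanish identically. In each step the only loss is the weight factor $|\mathcal{M}|/|\mathcal{M}\pm\mathcal{L}|\geq 1-\delta$ coming from the partial isometry not being surjective, giving $2\delta$ per step and $4\delta$ in total. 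If you want to salvage your single-shot dilation, you would need to either prove an $O(\delta)$ bound on the relevant Schur multiplier (which the counterexamples above suggest is false in general) or adopt the same two-step factorisation through an incoherent state.
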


Interestingly, what makes this difficult to prove are the quantum coherences that $\rho$ and $\sigma$ may have between the energy levels: in the classical case in which neither $\rho$ nor $\sigma$ has any coherence across energy levels, a unitary can easily be made energy-preserving by adding an ancilla in an initial state which can absorb any energy difference that may arise.

\begin{proof}
	We do this by distinguishing two cases: first, the case that $\sigma$ has no coherences across energy levels; second, the case that $\rho$ has no such coherences. In each case, we will use an ancilla of size $O(\log |\mathcal{M}|)$ with $\norm{H_\anc}\leq 2\norm{\mathcal{M}}$ and obtain a trace distance bound of $2\delta$. This is sufficient, since in the general case we can choose an arbitrary state $\tau$ without energy coherences which has the same spectrum (with multiplicities) as that of $\rho$ or $\sigma$, and compose the protocols constructed in the two cases, first from $\rho$ to $\tau$ and then from $\tau$ to $\sigma$. This results in the claimed bounds.

	Let the spectral decomposition of the system's Hamiltonian be $H = \sum_{\lambda\in\spec(H)} \lambda P_\lambda$, with $P_\lambda$ the projection onto the corresponding energy eigenspace. 

	\begin{itemize}
	\item[\underline{Case 1:}] $\sigma$ has no coherences across energy levels, i.e.~$P_\lambda \sigma P_\mu = 0$ if $\lambda\neq\mu$.
	
		In this case, let the ancilla space be $\hil_\anc:=\C^{|\mathcal{M} - \mathcal{L}|}$ with Hamiltonian given by $H_\anc = \sum_{h\in\mathcal{M} - \mathcal{L}} h |h\rangle\langle h|$. By~\eqref{Mbound} and $\delta < 1$, the ancilla size is indeed $\log|\mathcal{M} - \mathcal{L}| = O(\log|\mathcal{M}|)$ and moreover $\norm{H_\anc}\leq 2|\mathcal{M}|$. We take the initial ancilla state to be pure $\eta := |\psi\rangle\langle\psi|$, and given by the Hadamard state
	\[
		|\psi\rangle := |\mathcal{M} - \mathcal{L}|^{-1/2} \sum_{h\in\mathcal{M} - \mathcal{L}} |h\rangle.
	\]
	Furthermore, consider the energy-preserving partial isometry
	\[
		\tilde{V} := \sum_{h\in\mathcal{M}} \sum_{\lambda,\mu\in\mathcal{L}} P_\lambda U P_\mu \otimes |h - \lambda\rangle\langle h - \mu|,
	\]
	Then $\tilde{V} (\rho\otimes\eta) \tilde{V}^\dag$ evaluates to
	\begin{widetext}
	\begin{align*}
		&|\mathcal{M} - \mathcal{L}|^{-1} \sum_{h_1,h_2\in\mathcal{M}} \sum_{\lambda_1,\lambda_2,\mu_1,\mu_2\in\mathcal{L}} \left( P_{\lambda_1} U P_{\mu_1} \otimes |h_1 - \lambda_1\rangle\langle h_1 - \mu_1| \right) \\
		& \hspace{7cm} \times \left( \rho \otimes \sum_{\ell_1,\ell_2\in\mathcal{M}-\mathcal{L}} |\ell_1\rangle\langle \ell_2|\right) \left( P_{\mu_2} U^\dag P_{\lambda_2} \otimes |h_2 - \mu_2\rangle\langle h_2 - \lambda_2| \right) \\
		= {}& |\mathcal{M} - \mathcal{L}|^{-1} \sum_{\lambda_1,\lambda_2,\mu_1,\mu_2\in\mathcal{L}} P_{\lambda_1} U P_{\mu_1} \rho P_{\mu_2} U^\dag P_{\lambda_2} \otimes \sum_{h_1,h_2\in\mathcal{M}} |h_1 - \lambda_1\rangle\langle h_2 - \lambda_2| \\
		\approx_\delta {}& |\mathcal{M} - \mathcal{L}|^{-1} \sum_{\lambda_1,\lambda_2\in\mathcal{L}} P_{\lambda_1} \sigma P_{\lambda_2} \otimes \sum_{h_1,h_2\in\mathcal{M}} |h_1 - \lambda_1\rangle\langle h_2 - \lambda_2| \\
		= {}& |\mathcal{M} - \mathcal{L}|^{-1} \sum_{\lambda\in\mathcal{L}} P_\lambda \sigma P_\lambda \otimes \sum_{h_1,h_2\in\mathcal{M}} |h_1 - \lambda\rangle\langle h_2 - \lambda|,
	\end{align*}
	\end{widetext}
	where the last step uses the assumption of absence of coherence in $\sigma$. The resulting reduced state is therefore
	\begin{align*}
		\Tr{\anc}{\tilde{V}(\rho\otimes\eta)\tilde{V}^\dag} &\approx_\delta
		\sum_{\lambda\in\mathcal{L}} \frac{|\mathcal{M}|}{|\mathcal{M} - \mathcal{L}|} P_\lambda \sigma P_\lambda \\
		&= \frac{|\mathcal{M}|}{|\mathcal{M} - \mathcal{L}|} \sigma \geq (1-\delta) \sigma.
	\end{align*}
	So if we take $\tilde{U}$ to be any energy-preserving unitary dilation of $V$, so that $\tilde{V}$ decomposes into a direct sum of $\tilde{V}$ plus some other arbitrary energy-preserving partial isometry, then the total weight of $\rho\otimes\eta$ on the orthogonal complement of the support of $\tilde{V}$ is at most $\delta$. This shows that
	\[
		\norm{ \Tr{\anc}{\tilde{U}(\rho\otimes\eta)\tilde{U}} - \sigma }_1 \leq 2\delta,
	\]
	as desired.
	\item[\underline{Case 2:}] $\rho$ has no coherences across energy levels, i.e.~$P_\lambda \rho P_\mu = 0$ if $\lambda\neq\mu$. It turns out that we can proceed very similarly.
	
	In this case, let the ancilla space be $\hil_\anc:=\C^{|\mathcal{M} + \mathcal{L}|}$ with Hamiltonian given by $H_\anc = \sum_{h\in\mathcal{M} + \mathcal{L}} h |h\rangle\langle h|$. By~\eqref{Mbound} and $\delta < 1$, the ancilla size is indeed $\log|\mathcal{M} + \mathcal{L}| = O(\log|\mathcal{M}|)$ and moreover $\norm{H_\anc}\leq 2|\mathcal{M}|$. We take the initial ancilla state to be pure $\eta := |\psi\rangle\langle\psi|$, and given by the Hadamard state
	\[
		|\psi\rangle := |\mathcal{M} + \mathcal{L}|^{-1/2} \sum_{h\in\mathcal{M} + \mathcal{L}} |h\rangle.
	\]
	Furthermore, let $\tilde{U}$ to be any energy-preserving dilation of the energy-preserving partial isometry
	\[
		\tilde{V} := \sum_{h\in\mathcal{M}} \sum_{\lambda,\mu\in\mathcal{L}} P_\lambda U P_\mu \otimes |h + \mu\rangle\langle h + \lambda|,
	\]
	so that $\tilde{U}$ decomposes into a direct sum of $\tilde{V}$ plus an arbitrary other partial isometry. Then $\tilde{V} (\rho\otimes\eta) \tilde{V}^\dag$ evaluates to
	\begin{widetext}
	\begin{align*}
		&|\mathcal{M} + \mathcal{L}|^{-1} \sum_{h_1,h_2\in\mathcal{M}} \sum_{\lambda_1,\lambda_2,\mu_1,\mu_2\in\mathcal{L}} \left( P_{\lambda_1} U P_{\mu_1} \otimes |h_1 + \mu_1\rangle\langle h_1 + \lambda_1| \right) \\
		& \hspace{7cm} \times \left( \rho \otimes \sum_{\ell_1,\ell_2\in\mathcal{M}+\mathcal{L}} |\ell_1\rangle\langle \ell_2|\right) \left( P_{\mu_2} U^\dag P_{\lambda_2} \otimes |h_2 + \lambda_2\rangle\langle h_2 + \mu_2| \right) \\
		= {}& |\mathcal{M} + \mathcal{L}|^{-1} \sum_{\lambda_1,\lambda_2,\mu_1,\mu_2\in\mathcal{L}} P_{\lambda_1} U P_{\mu_1} \rho P_{\mu_2} U^\dag P_{\lambda_2} \otimes \sum_{h_1,h_2\in\mathcal{M}} |h_1 + \mu_1\rangle\langle h_2 + \mu_2| \\
		= {}& |\mathcal{M} + \mathcal{L}|^{-1} \sum_{\mu_1,\mu_2\in\mathcal{L}} U P_{\mu_1} \rho P_{\mu_2} U^\dag \otimes \sum_{h_1,h_2\in\mathcal{M}} |h_1 + \mu_1\rangle\langle h_2 + \mu_2| \\
		= {}& |\mathcal{M} + \mathcal{L}|^{-1} \sum_{\mu\in\mathcal{L}} U P_\mu \rho P_\mu U^\dag \otimes \sum_{h\in\mathcal{M}} |h + \mu\rangle\langle h + \mu|,
	\end{align*}
	\end{widetext}
	where the last step uses the assumption of absence of coherence in $\rho$. The resulting reduced state is therefore
	\begin{align*}
		\Tr{\anc}{\tilde{V}(\rho\otimes\eta)\tilde{V}^\dag} &=
		\sum_{\mu\in\mathcal{L}} \frac{|\mathcal{M}|}{|\mathcal{M}+\mathcal{L}|} U P_\mu \rho P_\mu U^\dag \\
		&= \frac{|\mathcal{M}|}{|\mathcal{M}+\mathcal{L}|} U \rho U^\dag \\
		&\approx_\delta \frac{|\mathcal{M}|}{|\mathcal{M}+\mathcal{L}|} \sigma \geq (1-\delta)\sigma.
	\end{align*}
	The claim now follows from the same estimate as in Case 1.\qedhere
	\end{itemize}
\end{proof}

We are now sufficiently equipped to approach the proof of the main result. We write $E(\rho)  = \mathrm{Tr}\left[ H \rho \right]$ for the average energy of a state $\rho$ on a system with Hamiltonian $H$.

\begin{theorem}[Asymptotic classification of states in thermodynamics]
	For states $\rho$ and $\sigma$ on any quantum system of dimension $d$ with given Hamiltonian $H$, the following are equivalent:
	\begin{enumerate}
		\item\label{app_equalee} The states have equal entropy and average energy, $S(\rho) = S(\sigma)$ and $E(\rho) = E(\sigma)$, 
		\item\label{app_conversion-best} There exists an ancilla system of size $O(\sqrt{n\log n})$ whose Hamiltonian $H_{\anc}$ satisfies $\norm{H_{\anc}} \leq O(n^{2/3})$ with state $\eta$ as well as an energy-preserving unitary $U$ such that
			\beqn
			\label{epsU}
				\norm{\Tr{\mathrm{anc}}{U (\rho^{\otimes n}\otimes\eta) U^\dag} - \sigma^{\otimes n}}_1 \stackrel{n\to\infty}{\longrightarrow} 0.
			\eeqn
		\item\label{conversion-worst} There exists an ancilla system of size $o(n)$ whose Hamiltonian $H_{\anc}$ satisfies $\norm{H_{\anc}} \leq o(n)$ with states $\eta$ and $\nu$ as well as energy-preserving unitaries $U$ and $V$ such that
			\beqn
			\label{epsUV}
				\norm{\Tr{\anc}{U (\rho^{\otimes n}\otimes\eta) U^\dag - V (\sigma^{\otimes n}\otimes\nu) V^\dag}}_1 \stackrel{n\to\infty}{\longrightarrow} 0.
			\eeqn
	\end{enumerate}
	\label{SEthm}
\end{theorem}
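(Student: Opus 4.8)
The plan is to establish the cyclic chain of implications $\ref{app_conversion-best}\Rightarrow\ref{conversion-worst}\Rightarrow\ref{app_equalee}\Rightarrow\ref{app_conversion-best}$. The implication $\ref{app_conversion-best}\Rightarrow\ref{conversion-worst}$ is immediate, taking $V=\id$ and $\nu$ a trivial ancilla. For $\ref{conversion-worst}\Rightarrow\ref{app_equalee}$: the equality $S(\rho)=S(\sigma)$ follows exactly as in Theorem~\ref{Sthm}, since adjoining or discarding an ancilla of size $o(n)$ shifts the von Neumann entropy by at most $o(n)$, unitaries leave it invariant, and a Fannes-type estimate applied to~\eqref{epsUV} on $d^n$ dimensions forces $|S(\rho)-S(\sigma)|=o(1)$. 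For the energy I would use energy preservation directly: since $[U,\,H^{(n)}\otimes\id+\id\otimes H_\anc]=0$ with $\|H_\anc\|=o(n)$, the reduced state $\Tr{\anc}{U(\rho^{\otimes n}\otimes\eta)U^\dag}$ carries average energy $nE(\rho)+o(n)$, and likewise $nE(\sigma)+o(n)$ on the $\sigma$ side; but~\eqref{epsUV} together with $\|H^{(n)}\|=O(n)$ forces these two reduced energies to differ by at most $O(n)\cdot o(1)$, so dividing by $n$ gives $E(\rho)=E(\sigma)$.

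The substance is $\ref{app_equalee}\Rightarrow\ref{app_conversion-best}$, which I would obtain by chaining Theorem~\ref{Sthm} with Lemma~\ref{coherence}. Replacing the single-system Hamiltonian $H$ by $H-E(\rho)\,\id$ — an overall additive energy shift, hence immaterial both for the hypotheses and for which unitaries are energy-preserving, and such that the ancilla Hamiltonian obtained for the shifted problem serves verbatim for the original one — I may assume $E(\rho)=E(\sigma)=0$. By the central limit theorem, $\rho^{\otimes n}$ and $\sigma^{\otimes n}$ are then, up to a vanishing trace-norm correction, both supported on the span of eigenvectors of $H^{(n)}$ with eigenvalue in a common window $\mathcal{L}\subseteq\R$ of width $O(\sqrt{n\log n})$ around the origin. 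Since $S(\rho)=S(\sigma)$, Theorem~\ref{Sthm} supplies a Hamiltonian-free ancilla of size $O(\sqrt{n\log n})$ and a (generally non-energy-preserving) unitary $U_0$ with $\|\Tr{\anc}{U_0(\rho^{\otimes n}\otimes\eta_0)U_0^\dag}-\sigma^{\otimes n}\|_1\to0$. I would then regard the physical system together with this ancilla as the ``system'' of Lemma~\ref{coherence}: its occupied energy levels lie in $\mathcal{L}$ (the ancilla being Hamiltonian-free), and after one more vanishing projection the output of $U_0$ may likewise be taken supported in $\mathcal{L}$, so that Lemma~\ref{coherence} upgrades $U_0$ to an energy-preserving unitary at the cost of one further ancilla with a nontrivial Hamiltonian. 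Tracing out all ancillas and undoing the vanishing approximations yields~\eqref{epsU}.

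What remains is to supply the auxiliary set $\mathcal{M}$ of Lemma~\ref{coherence} with the right parameters. The key structural fact is that the achievable energies of $H^{(n)}$ all lie in a single coset of a finitely generated subgroup of $(\R,+)$ of rank $r\le d-1$ (generated by the energy gaps of $H$); fixing a group isomorphism of that subgroup with $\mathbb{Z}^{r}$ identifies $\mathcal{L}$ with a finite subset of a translate of $\mathbb{Z}^{r}$, compatibly with addition (hence preserving all sumset cardinalities). I would take $\mathcal{M}$ to be the $t$-fold sumset of the symmetrisation $\mathcal{L}^{(s)}:=\mathcal{L}\cup(-\mathcal{L})$, with $t=t(n)\to\infty$ growing slowly, say $t=\lfloor n^{1/12}\rfloor$. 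Then $\|\mathcal{M}\|=t\,\|\mathcal{L}\|=o(n^{2/3})$, so $\|\mathcal{L}\|\le\|\mathcal{M}\|$ and Lemma~\ref{coherence} provides an ancilla Hamiltonian of norm $\le4\|\mathcal{M}\|=O(n^{2/3})$; since sumset cardinalities of a finite set in a lattice of bounded rank grow only polynomially with the number of summands, $|\mathcal{M}|$ is polynomial in $n$, so $\log|\mathcal{M}|=O(\log n)$ and the associated ancilla has size $O(\log n)$; and that same polynomial growth makes $|\mathcal{M}\pm\mathcal{L}|/|\mathcal{M}|$ at most the ratio of the cardinalities of the $(t+1)$-fold and $t$-fold sumsets of $\mathcal{L}^{(s)}$, which tends to $1$, hence is $\le1+\delta$ for any fixed $\delta$ and $n$ large. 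Adjoined to the ancilla of Theorem~\ref{Sthm}, the total ancilla keeps size $O(\sqrt{n\log n})$ and Hamiltonian norm $O(n^{2/3})$, as claimed; finally one lets $\delta\to0$ slowly as $n\to\infty$ and sums the finitely many $o(1)$ error contributions.

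The hard part is exactly this coordination within the last two steps. The unitary $U_0$ produced by Theorem~\ref{Sthm} is not energy-preserving, so its energy bookkeeping must be absorbed into an ancilla whose spectrum is simultaneously \emph{wide enough} — of energy width $\omega(\sqrt{n\log n})$ — for the small-sumset hypothesis $|\mathcal{M}\pm\mathcal{L}|\le(1+\delta)|\mathcal{M}|$ of Lemma~\ref{coherence} to hold, and \emph{narrow enough} — of energy width $o(n)$ — for the induced energy change per copy of the system to vanish. That such an intermediate scale exists is precisely what the hypothesis $E(\rho)=E(\sigma)$ buys, since it makes the typical energy bands of $\rho^{\otimes n}$ and $\sigma^{\otimes n}$ coincide; the possibility that $\rho$ or $\sigma$ carry coherences across energy levels is no further obstacle here, being handled internally by Lemma~\ref{coherence}.
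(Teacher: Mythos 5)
Your overall route is the same as the paper's: (b)$\Rightarrow$(c) trivially; (c)$\Rightarrow$(a) via the entropy argument of Theorem~\ref{Sthm} plus the analogous energy estimate using sublinearity of $\norm{H_\anc}$ and $\norm{H^{(n)}}=n\norm{H}$; and (a)$\Rightarrow$(b) by shifting the Hamiltonian so that $E(\rho)=E(\sigma)=0$, restricting to strongly energy-typical subspaces of width $O(\sqrt{n\log n})$, invoking Theorem~\ref{Sthm} for a non-energy-preserving unitary, and then upgrading it with Lemma~\ref{coherence} applied to a many-fold sumset $\mathcal{M}$ of the (symmetrised) typical level set $\mathcal{L}$, exploiting the fact that these levels are integer combinations of the $d$ single-system levels with $O(\sqrt{n\log n})$ coefficients, so iterated sumsets grow only polynomially. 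All of this matches the paper's proof step for step.

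The one step that is not justified as written is your choice of $\mathcal{M}$. You fix $t=\lfloor n^{1/12}\rfloor$ and assert that
\[
\frac{\left|(t+1)\mathcal{L}^{(s)}\right|}{\left|t\,\mathcal{L}^{(s)}\right|}\longrightarrow 1,
\]
from which you deduce the hypothesis $|\mathcal{M}\pm\mathcal{L}|\leq(1+\delta)|\mathcal{M}|$ of Lemma~\ref{coherence}. Polynomial sumset growth, $|t\mathcal{L}^{(s)}|\leq C(nt)^\gamma$, does not give this for a \emph{prescribed} $t$: telescoping the ratios only shows that the number of $t\leq T$ at which the ratio exceeds $1+\delta$ is $O\left(\log(nT)/\log(1+\delta)\right)$, so the ratio could still be large at the particular value $t=\lfloor n^{1/12}\rfloor$. (Khovanskii-type results that $|tA|$ becomes a polynomial in $t$ only apply beyond a threshold depending on $A$, and here $A=\mathcal{L}^{(s)}$ itself grows polynomially with $n$, so there is no uniform control at $t\sim n^{1/12}$.) The repair is to make the choice existential rather than fixed, exactly as the paper does around~\eqref{MA}: if $|(k+1)\mathcal{L}|>(1+\eps)|k\mathcal{L}|$ for every $k\leq n^{1/7}$, then $|n^{1/7}\mathcal{L}|\geq(1+\eps)^{n^{1/7}}$, contradicting the polynomial bound for large $n$; hence \emph{some} $k\leq n^{1/7}$ satisfies the small-doubling condition, and for that $k$ one still has $\norm{k\mathcal{L}}\leq k\norm{\mathcal{L}}=O(n^{1/7}\sqrt{n\log n})=O(n^{2/3})$ and $\log|k\mathcal{L}|=O(\log n)$. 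With that substitution your argument coincides with the paper's.
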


\begin{definition}
When one (and hence all) of these conditions hold, we say that $\rho$ is asymptotically equivalent to $\sigma$, and we write $\rho \asmpequiv \sigma$.
\end{definition}

The bound on $\norm{H_\anc}$ in condition~\ref{app_conversion-best} is not tight: our proof adapts straightforwardly if one replaces the exponent of $2/3$ by any other exponent strictly greater than $1/2$. We expect that the bound can be reduced even more, down to at least $O(\sqrt{n\log n})$ as in Theorem~\ref{Sthm}, but proving this will probably require a more fine-grained arithmetical analysis of the energy levels.

Our interpretation of this result is essentially analogous to Theorem~\ref{Sthm}. The bound on $\norm{H_{\anc}}$ is important in that without such a bound, we could transfer an arbitrary amount of energy to or from the ancilla while only modifying the system state marginally (embezzlement). Of course, none of this is specific to the observable under consideration being energy, and the theorem applies likewise to angular momentum or to any other observable. In fact, we expect the analogous theorem to hold for any finite number of commuting observables on the system that are required to be preserved by the unitaries, with very similar proof. The case of non-commuting observables may be more difficult.

\begin{proof}
	From~\ref{conversion-worst} to~\ref{app_equalee}, equality of entropy follows from Theorem~\ref{Sthm}. Equality of energy follows from an estimate analogous to the estimate of entropy difference. With $H^{(n)}$ being the $n$-qudit Hamiltonian and writing $\eps$ for the left-hand side of~\eqref{epsUV},
	\begin{widetext}
	\begin{align*}
		|E(\rho) - E(\sigma)| & = \frac{1}{n} |E(\rho^{\otimes n}) - E(\sigma^{\otimes n})| \leq \frac{1}{n}\left| E(\rho^{\otimes n} \otimes \eta) - E(\rho^{\otimes n} \otimes \nu)\right| + 2\frac{\norm{H_{\anc}}}{n} \\
		& = \frac{1}{n}\left| E\left(U(\rho^{\otimes n} \otimes \eta)U^\dag\right) - E\left(V(\rho^{\otimes n} \otimes \nu)V^\dag\right)\right| + 2 \frac{\norm{H_{\anc}}}{n} \\
		& \leq \frac{1}{n} \left| E\left( \Tr{\anc}{U(\rho^{\otimes n}\otimes\eta)U^\dag}\right) - E\left( \Tr{\anc}{V(\sigma^{\otimes n}\otimes\nu)V^\dag} \right) \right| + 4\frac{\norm{H_{\anc}}}{n} \\
		& \leq \frac{1}{n} \eps \norm{H^{(n)}} + 4\frac{\norm{H_{\anc}}}{n}
	\end{align*}
	\end{widetext}
	Since $H^{(n)}$ is additive in $n$, we have $\norm{H^{(n)}} = n\norm{H}$, and the first term vanishes as $\eps\to 0$. The second term vanishes as $n\to\infty$ due to the assumption of sublinearity of $\norm{H_{\anc}}$. Note that the bound on $H_{\anc}$ now plays the role of the bound on entropy change due to the ancilla.
	
	To see that~\ref{app_equalee} implies~\ref{app_conversion-best}, we first apply Theorem~\ref{Sthm}. So for given $\eps>0$, we have $n\in\N$ together with the other data such that
	\beqn
	\label{noenergyprev}
		\Tr{\mathrm{anc}}{U (\rho^{\otimes n}\otimes\eta) U^\dag} \approx_\eps \sigma^{\otimes n}.
	\eeqn
	We now need to find another unitary $\tilde{U}$ that achieves something like~\eqref{noenergyprev} while also being energy-preserving.
	
	Let the spectral decomposition of the system's Hamiltonian be $H = \sum_{i=1}^\ell E_i P_i$, and let us assume that the Hamiltonian has been shifted such that $E(\rho) = E(\sigma) = 0$ for simplicity. In order to impose strong energy typicality, let us consider the state $\rho_\typ$ obtained by restricting $\rho^{\otimes n}$ such that a measurement of $P_i^{(n)}$ will result in an outcome in the range $n \tr{P_i \rho} \pm \sqrt{n\log n}$ with certainty. By taking $n$ to be large enough, we can assume $\rho^{\otimes n} \approx_\eps \rho_\typ$ by the central limit theorem. Let $\mathcal{E}_\rho$ denote the set of energy levels of $H^{(n)}$ on this typical subspace, and let us throw in their negatives and $0$ for good measure,
	\[
		\mathcal{L}_\rho := \mathcal{E}_\rho \cup (-\mathcal{E}_\rho) \cup \{0\}.
	\]
	By construction, the set $\mathcal{E}_\rho$ consists of all numbers of the form $\sum_i c_i E_i$, with integer coefficients $c_i$ that satisfy $|c_i - n\tr{P_i\rho}| \leq \sqrt{n\log n}$ for all $i$. Therefore, every number in $\mathcal{L}_\rho$ is an integer linear combination of any nonzero fixed number in $\mathcal{L}_\rho$ and the single-system energy levels $E_i$, using coefficients that are $O(\sqrt{n\log n})$. This implies that the $k$-fold Minkowski sum
	\[
		k\mathcal{L}_\rho = \underbrace{\mathcal{L}_\rho + \ldots + \mathcal{L}_\rho}_{k\textrm{ times}} 
	\]
	also contains only numbers given by some fixed number plus integer linear combinations of the energy levels $E_i$ using coefficients of size $O(k\sqrt{n\log n})$. Therefore the cardinality $|k\mathcal{L}_\rho|$ is at most polynomial, $O(\poly(nk))$.

	With $\sigma_\typ$ and $\mathcal{L}_\sigma$ defined in the analogous manner and satisfying the analogous cardinality bound, let us put $\mathcal{L} := \mathcal{L}_\rho \cup \mathcal{L}_\sigma$, which then in particular contains all the energy levels that are typical for $\rho$ or for $\sigma$. We have the bound
	\begin{align*}
		|k\mathcal{L}| &= \left| \bigcup_{j=0}^k j\mathcal{L}_\rho + (k - j)\mathcal{L}_\sigma\right| \\
		&\leq \sum_{j=0}^k |j\mathcal{L}_\rho|\cdot |(k-j)\mathcal{L}_\sigma| = O(\poly(nk)),
	\end{align*}
	so let us choose an exponent $\gamma\in\N$ and a coefficient $C>0$ such that $|k\mathcal{L}|\leq C(nk)^\gamma$ for all $k$; the particular values are not important.

	We now aim to apply Lemma~\ref{coherence} using $\mathcal{M}:=k\mathcal{L}$. To determine a suitable value of $k$, we show that if $n$ is sufficiently large, then there is $k\leq n^{1/7}$ such that
	\beqn
		\label{MA}
		|k\mathcal{L} + \mathcal{L}| \leq (1+\eps) |k\mathcal{L}|.
	\eeqn
	For if this was not the case, then we would have $|(k+1)\mathcal{L}| > (1+\eps)|k\mathcal{L}|$, which yields by induction on $k$,
	\[
		|k\mathcal{L}| \geq (1 + \eps)^k |\mathcal{L}|.
	\]
	For $k = n^{1/7}$, we would then be led to conclude
	\[
		(1 + \eps)^{n^{1/7}} |\mathcal{L}| \leq |n^{1/7}\mathcal{L}| \leq C(n^{8/7})^\gamma.
	\]
	Since the left-hand side grows superpolynomially in $n$ while the right-hand side grows only polynomially, this cannot be the case for all $n$. It follows that for suitably large $n$, there is $k\leq n^{1/7}$ such that~\eqref{MA} holds; let us fix such a $k$. We now equip the existing ancilla in~\eqref{noenergyprev} with the trivial Hamiltonian $H_\anc := 0$, so that also $\rho_\typ\otimes\eta$ is supported on the energy levels in $\mathcal{L}$. Because $\Tr{\anc}{U(\rho_\typ\otimes\eta)U^\dag}$ is $3\eps$-close to $\sigma_\typ$, which is also supported on the energy levels in $\mathcal{L}$, it follows that $U(\rho_\typ\otimes\eta)U^\dag$ itself is already $3\eps$-close to being supported on the energy levels in $\mathcal{L}$. Let us write $\hat{\rho} := \rho^{\otimes n}\otimes\eta$ and $\hat{\sigma}$ for the restriction of $U(\rho_\typ\otimes\eta)U^\dag$ to the energy levels in $\mathcal{L}$, so that $\hat{\sigma} \approx_{3\eps} U\hat{\rho}U^\dag$. By taking $\mathcal{M}:=k\mathcal{L}$ in Lemma~\ref{coherence}, we can therefore conclude the existence of an additional ancilla system $\anc'$ of size $O(\log(|k\mathcal{L}|)) = O(\log n)$ with Hamiltonian bounded by $4\norm{k\mathcal{L}} = 4 k\norm{\mathcal{L}}\leq n^{1/7}\cdot O(\sqrt{n\log n}) < O(n^{2/3})$ as claimed, with an ancilla state $\eta'$ and energy-preserving unitary $\tilde{U}$ such that
	\[
		\Tr{\anc'}{\tilde{U}(\hat{\rho}\otimes \eta')\tilde{U}^\dag} \approx_{12\eps} \hat{\sigma}.
	\]
	Putting all this together, we therefore have
	\begin{align*}
	\Tr{\anc,\anc'}{\tilde{U}(\rho^{\otimes n}\otimes\eta\otimes\eta')\tilde{U}^\dag}
	&= \Tr{\anc,\anc'}{\tilde{U}(\hat{\rho}\otimes \eta')\tilde{U}^\dag} \\
	&\approx_{12\eps} \Tr{\anc}{\hat{\sigma}} \\
	&\approx_{3\eps} \Tr{\anc}{U\hat{\rho}U^\dag} \\
	&\approx_\eps \sigma^{\otimes n},
	\end{align*}
	resulting in a total trace distance difference between the left-hand side and the right-hand side of at most $16\eps$.
\end{proof}

\end{document}